\documentclass[a4paper,USenglish,cleveref, autoref, thm-restate]{lipics-v2021}

\pdfoutput=1 


\bibliographystyle{plainurl}

\title{Theoretical insights and an experimental comparison of tango trees and multi-splay trees} 

\titlerunning{Experimental tango and multi-splay trees} 

\author{Khaleel Al-Adhami}{Massachusetts Institute of Technology, Cambridge, MA, USA 
\and \url{https://adhami.me} 
}{adhami@mit.edu}{https://orcid.org/0009-0004-1362-221X}{}

\author{Dev Chheda\footnote{corresponding author}}{Massachusetts Institute of Technology, Cambridge, MA, USA
\and \url{https://vedadehhc.github.io} 
}{dchheda@mit.edu}{https://orcid.org/0009-0007-1178-2940}{}

\authorrunning{K. Al-Adhami and D. Chheda} 

\Copyright{Khaleel Al-Adhami and Dev Chheda} 

\ccsdesc[100]{Theory of computation~Data structures design and analysis} 

\keywords{tango trees, multi-splay trees, splay trees, data structures, implementation} 

\category{} 

\relatedversion{} 


\supplementdetails[subcategory={Source Code}, cite=github]{Software}{https://github.com/adhami3310/tango} 


\acknowledgements{We thank Erik Demaine for answering our queries and providing clarifications regarding the tango tree paper \cite{tango}. We also thank David Karger and Julian Shun for their invaluable feedback on this paper. Finally, we thank Gopal Goel for his assistance with some of the analysis in Section \ref{sec:ms-wset}.}

\nolinenumbers 


\newcommand{\IB}{\mathrm{IB}}
\newcommand{\OPT}{\mathrm{OPT}}
\newcommand{\AT}{\mathrm{AT}}
\usepackage{asymptote}

\begin{document}
\maketitle

\begin{abstract}
The tango tree is the first proven $O(\lg \lg n)$-competitive binary search tree (BST). We present the first ever experimental implementation of tango trees and compare the running time of the tango tree with the multi-splay tree and the splay tree on a variety of families of access sequences. We construct access sequences that are intended to test specific properties of BSTs. The results of the other experiments demonstrate the optimality of the splay tree and multi-splay tree on these accesses, while simultaneously demonstrating the tango trees inability to achieve optimality. We prove that the running time of tango trees on the sequential access is $\Theta(n \lg \lg n)$, which provides insight into why the $\Theta(\lg \lg n)$ slow down exists on many access sequences. Motivated by experimental results, we conduct a deeper analysis of the working set access on multi-splay trees, leading to new insights about multi-splay tree behavior. Finally, all of the experiments also reveal insights about large constants and lower order terms in the multi-splay tree, which make it less practical than the splay tree, even though its proven competitive bound is tighter. 
\end{abstract}

\section{Introduction}

The binary search tree (BST) is a classic data structure for maintaining an ordered set of data with fast queries (referred to as `accesses'). Binary search trees allow following left/right child and parent pointers, along with performing rotations as described further in Section \ref{sec:bst}. BSTs are particularly useful because, if balanced, they allow accesses in time $O(\lg n)$ for a tree on $n$ keys. As a result, many popular BSTs use rotations to ensure the tree is balanced (i.e. height is $O(\lg n)$) during insertions and deletions so that all accesses can be performed in $O(\lg n)$ time. Specific examples include red-black trees \cite{guibas1978dichromatic} and AVL trees \cite{adelson1962algorithm}.

However, these balanced BST algorithms are fundamentally limited in that often cannot do better than $\Theta(\lg n)$ per query. In particular, many real-world access sequences have structure to them (e.g. working-set and sequential accesses) beyond a random access \cite{levy2019foundation}. But, the balanced BSTs, which do not modify their structure during access cannot adjust to take advantage of this structure. As a simple example, consider the access sequence which repeatedly accesses the smallest key in the BST. On a balanced BST such as the red-black tree, each access will take $\Theta(\lg n)$, but if the tree was reorganized to have this element at the root, we could achieve $O(1)$ per access. 

Self-adjusting trees attempt to solve this by performing rotations to reorganize the tree during accesses. The most popular and interesting amongst these is the splay tree \cite{splay}, which achieves optimal performance (within a constant factor) on several access sequences. And, the dynamic optimality conjecture proposes that the splay tree is optimal (up to constant factor) on all BST access sequences. 

However, no stronger competitive bound has been proven on the splay tree besides the trivial $O(\lg n)$-competitiveness (which is obtained since the splay tree has $O(\lg n)$ amortized on any access). The first major progress on competitive binary search trees came from Demaine et al. \cite{tango} with the tango tree, the first proven $O(\lg \lg n)$-competitive BST. However, as discussed later in Section \ref{sec:worst-case}, the tango tree fails to be optimal on several access sequences. Another closely related BST, the multi-splay tree \cite{mst}, is more promising, as it is also $O(\lg \lg n)$-competitive, but achieves optimal performance on many access sequences where the tango tree fails \cite{mst-prop}. 

Finding BSTs with competitive bound is important because it brings us closer to an algorithm that is ideal for any access sequence. In practice, such a tree would display the properties of several specialized data structures by automatically adapting to changing access patterns in an online fashion \cite{levy2019foundation}. 

This work focuses on experimental evaluation of and corresponding theoretical insights related to tango, multi-splay, and splay trees. Our contributions include: developing a family of access sequences which test interesting BST properties; comparing practical performance amongst various BST algorithms; using experimental results to drive theoretical insights about optimality and behavior of these trees.  

To the best knowledge of the authors, no complete experimental implementation of tango trees has been published thus far, and few implementations and experiments have been conducted on the multi-splay tree. In this work, we present the first implementation of the tango tree, and compare the tango tree performance to the multi-splay tree \cite{mst} and the splay tree \cite{splay} to test for optimality on different input sequences. 

After analyzing experimental results, we conduct a deeper analysis of the multi-splay tree's counter-intuitive behavior on the working set access. Motivated by experimental data, we are able to prove new results about the lower order terms in the multi-splay tree running time. We also use a combination of theoretical analysis and experimental results to gain better insight into the drawbacks of the tango tree, and what about its structure leads to slower running time on certain inputs.

\section{Background}

In this section, we present background on BSTs, the interleave bound, and the tango tree. The tango tree is an online binary search tree data structure, and the first to be proven to be $O(\lg \lg n)$-competitive against the optimal offline BST data structure \cite{tango}.

\subsection{BST Computation Model}
\label{sec:bst}

When discussing the optimality of binary search trees, it is necessary to precisely define the types of data structures and operations that are allowed and how the cost of operations is computed. As is standard, we use the model developed by Wilber \cite{wil89}. 

A BST data structure supports accesses on a static set of $n$ keys stored in its nodes. Here, we only consider the universe of keys $\{1, \dots, n\}$ for simplicity. The nodes in a BST are in symmetric order, and each node stores a pointer to its parent, pointers to its left and right children, and any auxiliary information (not including additional pointers). The auxiliary information has size $O (\lg n)$ bits per node, keeping with the convention used by \cite{mst} and others. The input to the BST is an access sequence $X = x_1, x_2, \dots, x_m$ where each access key is chosen from the universe of keys.

The BST access algorithm is allowed to keep a single pointer to a node in the BST. On each access, the pointer is initialized to the root of the tree (with cost 1). Then, the access algorithm can follow any pointer (i.e. parent pointers, left/right child pointers) with unit cost, and can perform rotations with unit cost. In this way, the cost of any access is equal to the sum total number of pointers of followed and rotations performed in the BST. Note that in the BST computation model, modifying the fields of a node (to which the access pointer currently points) is considered free.  

Note that this computation model does not account for insertion or deletions. In this paper, we only consider BSTs which are already constructed, and we define the construction scheme where relevant.

\subsection{Interleave Lower Bound}
\label{sec:ilb}

Both the algorithm and competitive analysis of tango trees are motivated by the interleave lower bound, so we discuss that first. The interleave lower bound serves as a lower bound on the running time of any BST data structure on any given access sequence $X = x_1, x_2, \dots, x_m $. We state the lower bound here to justify the competitive and runtime analysis of tango trees, but do not provide a proof as it is outside the scope of this paper (we refer readers to \cite{tango} for the proof). 

Consider the set of keys $\{ 1, 2, \dots, n \}$ (or re-label the keys in this way), and also consider the static complete binary tree $P$ on these keys. For each node $v$ in $P$, we can define the left region $L(v)$ to be the union of the left subtree of $v$ and $v$ itself, and the right region $R(v)$ the be the right subtree of $v$ (not including $v$). Then, $L(v)$ and $R(v)$ partition the subtree rooted at $v$ such that for every access $x_i$, it satisfies exactly one of the following: 1) $x_i$ is outside the subtree rooted at at $v$; 2) $x_i$ is in $L(v)$; 3) $x_i$ is in $R(v)$. Now, if we consider only those accesses that access elements inside the subtree of $v$, $X^v$, the \textit{interleaving number} of $v$ is the number of times the accesses $X^v$ switch between $L(v)$ and $R(v)$. That is, the interleaving number counts the number of accesses $x^v_i \in X^v$ such that either $x^v_i \in L(v)$ and $x^v_{i+1} \in R(v)$ or $x^v_i \in R(v)$ and $x^v_{i+1} \in L(v)$.

The \textit{interleave bound} for a sequence of accesses $X$ is $\IB(X)$, the sum of the interleaving numbers across all nodes (i.e. all keys) in the complete binary tree $P$. The interleave lower bound is:

\begin{theorem}
\label{thm:ilb}
    The total cost of the optimal offline BST on sequence, $\OPT(X)$, satisfies $\OPT(X) \geq \IB(X)/2 - n$ for any access sequence $X$.
\end{theorem}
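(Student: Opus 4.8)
The plan is to establish the interleave bound as a lower bound on $\OPT(X)$ by a charging argument: I will show that the optimal offline BST must ``pay'' for each interleave at some node $v$ in the reference tree $P$, and that these payments can be accumulated across all nodes with only a bounded amount of double-counting. The key conceptual tool is that although $P$ is a fixed static tree used only for \emph{defining} the bound, the actual BST executing the accesses is free to rotate; so the argument cannot directly inspect the structure of the execution tree. Instead, I would track, for each node $v$ of $P$, the ``transition point'' of the subtree rooted at $v$ — informally, the node that the optimal BST touches when control passes between $L(v)$ and $R(v)$.

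\textbf{Main steps.} First I would fix a node $v$ and consider the restricted access subsequence $X^v$ of accesses falling inside the subtree of $v$ in $P$. For each interleave of $v$ (a consecutive pair in $X^v$ that switches between $L(v)$ and $R(v)$), I want to identify a distinct unit of work done by the optimal execution. The natural choice is to define, at each point in time, the minimum-depth node in the \emph{current} execution tree that lies within the subtree of $v$; call it the ``reference point'' $r_v$. Second, I would argue that whenever $X^v$ switches sides, the optimal algorithm must either touch $r_v$ or rotate it, and in particular must touch some node that I can charge exclusively to $v$ at that moment. Third, I would sum over all nodes $v$ of $P$. The crucial observation that prevents overcounting is that for different nodes $v$, the corresponding reference points $r_v$ are distinct at any fixed time step (they live at different depths / in disjoint conceptual roles), so a single unit of touching-or-rotation work in the execution is charged to at most a constant number of nodes $v$ — this is where the factor of $1/2$ (and the $-n$ correction) enters.

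\textbf{The main obstacle} I anticipate is making the charging \emph{injective enough}: I must guarantee that the work I charge to an interleave of $v$ is not simultaneously charged to an interleave of some other node $w$, and that I do not charge the same execution step twice for the same node across consecutive interleaves. The subtlety is that the BST under analysis is arbitrary and dynamic, so ``the reference point of $v$'' moves around as the tree is rotated, and I must show that transitioning it from the $L(v)$ side to the $R(v)$ side genuinely costs the optimal algorithm work it cannot avoid or share. Handling the boundary effects — the first access at each node, and nodes whose subtrees are touched only trivially — is exactly what produces the additive $-n$ term, since there are $n$ nodes in $P$ and each contributes at most a bounded constant of untracked work.

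\textbf{An alternative route}, which I suspect is cleaner, is to invoke Wilber's original first-lower-bound machinery directly: define for each node $v$ of $P$ a ``preferred child'' that points toward the side ($L(v)$ or $R(v)$) of the most recent access in $X^v$, so that $\IB(X)$ counts precisely the total number of preferred-child switches. Then I would show, via an amortized potential on the optimal execution tree, that each preferred-child switch forces a change in the execution tree that can be amortized against a unit of its cost. This reframes the obstacle as designing the right potential function rather than a direct injective charging, but the core difficulty — relating a static combinatorial count on $P$ to the dynamic cost of an unconstrained BST — remains the same, and I expect that is where the bulk of the technical care must go.
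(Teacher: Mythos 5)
A preliminary remark: the paper does not prove Theorem~\ref{thm:ilb} at all---it explicitly defers to Demaine et al.~\cite{tango}---so your proposal has to be judged against the standard transition-point argument of Wilber and of that paper, which is indeed the architecture you are reaching for. Within that architecture there is a genuine gap, and it sits exactly at the step you yourself flag as crucial. You define the reference point $r_v$ to be the minimum-depth node of the current execution tree $T$ lying in the subtree of $v$ in $P$; this is precisely the lowest common ancestor in $T$ of that key interval. For this definition the distinctness claim---``for different nodes $v$, the reference points $r_v$ are distinct at any fixed time step''---is simply false. Take $n=7$, $P$ rooted at $4$ with child $2$, and suppose the execution tree $T$ is currently rooted at the key $2$: the minimum-depth node of $T$ in the interval $\{1,\dots,7\}$ and in the interval $\{1,2,3\}$ is the same node $2$, so $r_4 = r_2$. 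In general, whenever the lca of a subtree's interval happens to fall inside one child's interval, the reference points of a whole chain of nodes of $P$ coincide, a single touched node gets charged up to $\Omega(\lg n)$ times, and the summation over $v$ collapses. The fix---and the real content of the cited proof---is to define the transition point of $v$ as the \emph{deeper} of $\ell = \mathrm{lca}_T(L(v))$ and $r = \mathrm{lca}_T(R(v))$ (one is always an ancestor of the other, since $L(v) \cup R(v)$ is a contiguous key interval). One must then prove three facts: any access to the side of $v$ containing the transition point must touch it; the transition point does not change while it is not touched; and transition points of distinct nodes of $P$ are distinct at every instant. None of these is immediate, and the third is exactly the property your $r_v$ lacks.

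Your accounting of the constants is also off in a telling way. In the correct proof each touched node is charged to at most \emph{one} node of $P$ (by distinctness), not ``a constant number,'' and the factor $1/2$ does not come from bounded double-counting across different $v$; it comes from the fact that the transition point lies on only one of the two sides $L(v)$, $R(v)$, so only the interleaves entering that side are guaranteed to touch it---that is, only every other interleave at $v$ can be charged. The additive $-n$ is, as you say, a per-node boundary loss (a $-1$ per node of $P$, absorbing the first block and the fact that the transition point may relocate after being touched). Finally, your ``alternative route'' via a potential function does not evade the problem: Wilber's machinery \emph{is} the transition-point argument, and any amortized variant still needs, for each node of $P$, an object in the execution tree with the stability and distinctness properties above; designing that object is the whole difficulty, not a detail that a potential function can absorb.
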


\subsection{Tango tree structure}

The tango tree is defined in reference to an augmented version of the complete binary tree $P$ on the keys (as defined in Section \ref{sec:ilb}). We augment $P$ by keeping track of \textit{preferred paths}. Particularly, for every node $v$, its \textit{preferred child} is set based on which of its subtrees was most recently accessed. That is, if a node in $L(v)$ is accessed, $v$'s left child is now its preferred child. And, if a node in $R(v)$ is accessed, $v$'s right child is now its preferred child. The preferred paths of $P$ are the paths connected by following preferred child pointers. An example of the preferred paths decomposition of $P$ on 15 nodes is shown in Fig. \ref{fig:pref}.

\begin{figure}[ht]
    \centering
    \begin{asy}
        size(200);
        import graphs;
        Vertex a1 = Vertex(0, 0, "8");
        
        Vertex b2 = Vertex(-1, -.5, "4");
        Vertex c2 = Vertex(1, -.5, "12");
        
        Vertex b31 = Vertex(-1.5, -1, "2");
        Vertex b32 = Vertex(-0.5, -1, "6");
        
        Vertex c31 = Vertex(.5, -1, "10");
        Vertex c32 = Vertex(1.5, -1, "14");

        Vertex b411 = Vertex(-1.75, -1.5, "1");
        Vertex b412 = Vertex(-1.25, -1.5, "3");
        Vertex b421 = Vertex(-.75, -1.5, "5");
        Vertex b422 = Vertex(-.25, -1.5, "7");

        Vertex c411 = Vertex(.25, -1.5, "9");
        Vertex c412 = Vertex(.75, -1.5, "11");
        Vertex c421 = Vertex(1.25, -1.5, "13");
        Vertex c422 = Vertex(1.75, -1.5, "15");
        
        Vertex.connect(a1,b2);
        Vertex.connect(a1,c2);

        Vertex.connect(b2, b31);
        Vertex.connect(b2, b32);

        Vertex.connect(c2, c31);
        Vertex.connect(c2, c32);

        Vertex.connect(b31, b411);
        Vertex.connect(b31, b412);
        Vertex.connect(b32, b421);
        Vertex.connect(b32, b422);

        Vertex.connect(c31, c411);
        Vertex.connect(c31, c412);
        Vertex.connect(c32, c421);
        Vertex.connect(c32, c422);
        
        Vertex.drawAll(lab=true);
        
        Vertex.highlight(a1, b2);
        Vertex.highlight(b2, b32);
        Vertex.highlight(b32, b421);
        
        Vertex.highlight(c2, c31);
        Vertex.highlight(c31, c411);

        Vertex.highlight(b31, b411);
        Vertex.highlight(c32, c422);
    \end{asy}
    \caption{An example complete binary tree on 15 keys, with one possible collection of preferred paths highlighted. Note that leaves which are not the the preferred child of any node form a preferred path of one node including only themselves.}
    \label{fig:pref}
\end{figure}

The key idea with tango trees is that interleaves are naturally encoded into this preferred path structure, providing a fairly straightforward route to a competitive upper bound. 

\begin{lemma}
\label{lma:interleave}
    Consider the path from the root of $P$ to $x_i$ for some access $x_i$ and let the number of non-preferred edges taken be $k$. The increase in the interleave bound after the access to $x_i$ is $k$. 
\end{lemma}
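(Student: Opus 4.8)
The plan is to localize the change in $\IB$ to the nodes lying on the root-to-$x_i$ path, and then match each unit increase to a non-preferred edge. The starting observation is that $\IB(X)$ is a sum, over all nodes $v$ of $P$, of the number of left/right switches in the subsequence $X^v$. An access to $x_i$ appends $x_i$ to $X^v$ precisely for those $v$ whose subtree contains $x_i$, i.e.\ for $x_i$ together with its ancestors; for every other node $v$ the sequence $X^v$ is unchanged, so its interleaving number is unaffected. Moreover, appending a single element to $X^v$ creates exactly one new adjacent pair (the former last element of $X^v$ paired with $x_i$) and leaves all earlier pairs intact, so the interleaving number of each affected $v$ can rise by at most $1$, and it rises by exactly $1$ iff $x_i$ lies in a different region ($L(v)$ versus $R(v)$) than the most recent earlier access in $X^v$.

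The second step is to read off that ``most recent region'' directly from the preferred-child structure. By definition the preferred child of $v$ points toward the subtree containing the most recently accessed node of $v$'s subtree; equivalently, it records whether the last element of $X^v$ prior to this access lay in $L(v)$ or in $R(v)$. Hence, for a \emph{proper} ancestor $v$ of $x_i$, the new access $x_i$ falls in the opposite region from the previous one exactly when the child of $v$ on the path toward $x_i$ is not the preferred child of $v$ --- that is, exactly when the edge out of $v$ along the path is a non-preferred edge. Combining this with the first step, the number of proper ancestors whose interleaving number increases equals the number of non-preferred edges on the root-to-$x_i$ path, namely $k$. I would also note the companion reformulation that this set is precisely the set of nodes whose preferred child flips during the access, since the same $L/R$ comparison governs both events; phrasing the result as ``$\IB$ increase $=$ number of preferred-child changes'' is what will make it convenient for the later amortized analysis.

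The delicate point, and the step I expect to be the main obstacle, is the terminal node $x_i$ itself: it is the one node whose interleaving number can change without a corresponding edge \emph{out of} it being traversed along the path. If $x_i$ is a leaf this causes no trouble, since $R(x_i)=\emptyset$ forces every element of $X^{x_i}$ into $L(x_i)$, so the interleaving number of $x_i$ is identically $0$ and only proper ancestors contribute. When $x_i$ is an internal node one must argue carefully, using the convention $x_i\in L(x_i)$ and the manner in which the search terminates at $x_i$, how the contribution of $x_i$ relates to the edge count; without such care the true increase could exceed $k$ by the indicator of an $L/R$ flip at $x_i$. Establishing exact equality (rather than merely a lower bound of $k$) at this boundary --- for instance via a reduction to a keys-at-leaves reference tree, or by showing the terminal flip is absorbed into the stated accounting --- is the crux, while everything else reduces to the bookkeeping of the two steps above.
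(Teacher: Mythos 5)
The paper never proves this lemma---it is stated as background, inherited from \cite{tango}, and the text jumps straight to its consequences---so there is no in-paper proof to compare against; your proposal has to stand on its own. Your first two steps are exactly the right argument: the access changes $X^v$ only for ancestors $v$ of $x_i$; appending one element raises each affected interleaving number by at most one; and for a \emph{proper} ancestor $v$ (with a defined preferred child) the increase occurs precisely when the path edge out of $v$ is non-preferred, since the preferred child records the region of the last access in $X^v$. One caveat you gloss over: if $v$ has no preferred child yet (no prior access in its subtree), the edge out of $v$ is non-preferred but contributes no interleave increase, so the increase can also be \emph{less} than $k$; the paper quietly absorbs exactly this discrepancy in its competitive-bound theorem by allowing at most $n$ ``first-time'' preferred-child settings over the whole sequence.

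The ``crux'' you isolate at the terminal node is not a defect of your proof but of the statement itself: the lemma as written is false for internal accesses. Take $n=3$, so $P$ has root $2$ with children $1$ and $3$, and access $3$ and then $2$. The access to $2$ traverses no edges at all, so $k=0$, yet $X^2$ becomes $(3,2)$ with $3\in R(2)$ and $2\in L(2)$, so the interleaving number of node $2$ increases by $1$. The correct invariant---and the one the paper actually invokes later, when it equates the interleave increase with ``the number of updated preferred child pointers''---is the reformulation you propose yourself: the increase in $\IB$ equals the number of preferred-child \emph{changes} (left-to-right or right-to-left) caused by the access. That version counts the possible flip at $x_i$ itself (whose preferred child is set to left, since $x_i\in L(x_i)$) on both sides of the equation, and it differs from the edge count $k$ by at most one per access, which is harmless for the $O((k+1)(1+\lg\lg n))$ accounting. (Treatments that store keys only at the leaves of the reference tree, as you suggest, avoid the issue by fiat, but that is not the convention of this paper.) So your instinct was correct on both fronts: the two bookkeeping steps are the whole proof for proper ancestors, and the terminal node genuinely cannot be made to equal $k$---the repair is to prove and use the preferred-child-change formulation, under which your argument is complete.
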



So, intuitively, if we can quickly navigate within each preferred path, the transitions between preferred paths are paid for by the increase in the interleave bound (at least from a competitive perspective). The tango tree accomplishes this by storing each preferred path as an auxiliary tree, which are then joined together to form a single BST. The auxiliary trees are implemented using augmented red-black trees, which we describe in depth in Section \ref{sec:auxiliary}.

Specifically, in order to construct the tango tree, we first disconnect the preferred path containing the root of $P$ and transform it into an auxiliary tree. This disconnects $P$ into several subtrees, which are recursively constructed into tango trees. Then, the resultant trees are reattached to the root auxiliary tree as uniquely determined by BST ordering. 

In Fig. \ref{fig:tango1}, we show an example of this construction using the same preferred path decomposition shown in Fig. \ref{fig:pref}. The auxiliary trees are highlighted, although note that the auxiliary trees were constructed arbitrarily here. 

\begin{figure}[ht]
    \centering
    \begin{asy}
        size(250);
        import graphs;

        Vertex v5 = Vertex(0, 0, "5");
        Vertex v6 = Vertex(.5, -.5, "6");
        Vertex v4 = Vertex(-.5, -.5, "4");
        Vertex v8 = Vertex(1, -1, "8");

        Vertex v2 = Vertex(-1, -1, "2");
        Vertex v1 = Vertex(-1.5, -1.5, "1");

        Vertex v3 = Vertex(-.5, -1.5, "3");

        Vertex v7 = Vertex(.5, -1.5, "7");

        Vertex v10 = Vertex(1.5, -1.5, "10");
        Vertex v9 = Vertex(1, -2, "9");
        Vertex v12 = Vertex(2, -2, "12");

        Vertex v11 = Vertex(1.5, -2.5, "11");

        Vertex v14 = Vertex(2.5, -2.5, "14");
        Vertex v15 = Vertex(3, -3, "15");

        Vertex v13 = Vertex(2, -3, "13");
        
        Vertex.connect(v5, v6);
        Vertex.connect(v5, v4);
        Vertex.connect(v6, v8);

        Vertex.connect(v4, v2);
        
        Vertex.connect(v2, v1);
        
        Vertex.connect(v3, v2);

        Vertex.connect(v8, v7);

        Vertex.connect(v8, v10);

        Vertex.connect(v10, v9);
        Vertex.connect(v10, v12);

        Vertex.connect(v12, v11);
        Vertex.connect(v12, v14);

        Vertex.connect(v14, v15);
        Vertex.connect(v14, v13);
        
        Vertex.drawAll(lab=true);

        Vertex.highlight(v5, v6);
        Vertex.highlight(v5, v4);
        Vertex.highlight(v6, v8);

        Vertex.highlight(v2, v1);
        
        Vertex.highlight(v10, v9);
        Vertex.highlight(v10, v12);

        Vertex.highlight(v14, v15);
    \end{asy}
    \caption{An example construction of a tango tree on 15 nodes based on the preferred paths composition shown in Fig. \ref{fig:pref}. Auxiliary trees are highlighted.}
    \label{fig:tango1}
\end{figure}

\subsection{Tango tree access algorithm}
\label{sec:tango-algo}

The tango tree specifies a BST access algorithm. All insertions and deletions are assumed to happen before accesses and are handled according to the red-black tree insertion/deletion algorithm. So, we only describe and analyze the access algorithm here. 

After insertions and deletions are completed, we say that the tree is \textit{locked}. Once the tree locks, we augment every node $u$ with $d_u$, representing the depth of $u$ in the tree when locking the tree. Let $\AT(u)$ be the auxiliary tree that $u$ belongs to. Every node $u$ initially will be its own auxiliary tree, denoted by $\AT(u) = u$.

The tango tree access algorithm reflects a typical BST walk. When the algorithm attempts to move from node $u$ to another auxiliary tree rooted at $v$, it starts by cutting $\AT(u)$ at depth $\min_{p\in \AT(v)} d_{p}-1=d'$, meaning that all nodes of depth larger than $d'$ are separated and made into their own auxiliary tree. Then, it joins $\AT(u)$ with $\AT(v)$ into one single auxiliary tree. This process causes the preferred child across the path to swap into $v$. The algorithm then continues from $v$. For simplicity, our implementation does these cuts and joins lazily after we reach the end of the access.

One could optimize the constant factor of the tango tree by choosing an order of these cuts and joins such that when joining two auxiliary trees they will be in similar size; as the complexity of joining two auxiliary trees is linear in the difference of the depths of the two trees. 

In order to implement these operations we augment every node $u$ with $a_u$ and $b_u$, representing the minimum and maximum depth of any node in $\AT(u)$ which is in the subtree of $u$, respectively. We need to maintain these values when rotating, cutting, and joining auxiliary trees.

\subsection{Tango tree auxiliary trees}
\label{sec:auxiliary}

The cut and join operations can be implemented using split and concatenate operations as explained in \cite{cutjoin}. To split an auxiliary tree at depth $d$, we find $l'$ and $r'$ such that all nodes with values in the range $(l',r')$ have depth larger than $d$, and all nodes outside that range have depth smaller than or equal to $d$. This is always true as each auxiliary tree is composed of nodes of unique and continuous depth values, so a continuous subsection of key values in the auxiliary tree corresponds to a continuous range of depth values. Then we split the auxiliary tree at $l'$ and then at $r'$. We then mark the left child of $r'$ to be a new auxiliary tree (which corresponds to nodes of depth larger than $d$). Finally we concatenate $r'$ and then $l'$. Joining is simply the same process but backwards.

\subsection{Tango tree running time and competitive bound}

\begin{lemma}
\label{lma:accesstime}
    Let $k$ be the number of preferred child pointers updated by the access to $x_i$ on the tango tree. Then, the running time of the access is $O ((k+1)(1 + \lg \lg n))$. 
\end{lemma}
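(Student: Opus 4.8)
The plan is to split the cost of an access into two contributions—the cost of searching within the auxiliary trees that lie along the root-to-$x_i$ path in $P$, and the cost of the cut/join operations that flip the preferred child pointers—and to show that each contribution is $O((k+1)(1+\lg\lg n))$. The foundational observation, which everything rests on, is a size bound: each auxiliary tree stores exactly one preferred path of $P$, and a preferred path is a downward path in a complete binary tree of height $O(\lg n)$, so every preferred path has $O(\lg n)$ nodes. Since each auxiliary tree is a balanced red-black tree on $O(\lg n)$ nodes, its height is $O(\lg\lg n)$. I would state and justify this first, as both the search cost and the restructuring cost will be charged against it.

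Next I would count how many auxiliary trees the access touches. The BST walk toward $x_i$ traces the path from the root of $P$ to $x_i$; this path remains inside a single preferred path until it crosses a non-preferred edge, at which point it enters a new preferred path (and the preferred child pointer at that node flips). Thus the number of preferred child pointers updated, $k$, equals the number of non-preferred edges crossed, and the walk passes through exactly $k+1$ preferred paths, i.e. $k+1$ auxiliary trees. Locating the exit node within each auxiliary tree (or $x_i$ itself in the last one) is a search in a balanced tree of height $O(\lg\lg n)$, costing $O(1+\lg\lg n)$; here the additive $1$ absorbs auxiliary trees of constant size and the degenerate regime where $\lg\lg n$ is small, and the factor $k+1$ rather than $k$ correctly accounts for the single search still required when $k=0$.

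The heart of the argument is bounding the $k$ cut/join operations. Each is implemented, as in Section \ref{sec:auxiliary}, by a constant number of split and concatenate operations on the augmented red-black auxiliary trees. Split and concatenate on a red-black tree of size $s$ run in $O(\lg s)$ time, and since every auxiliary tree has size $O(\lg n)$ this is $O(\lg\lg n)$ per operation; I would verify that this bound survives the augmentation, i.e. that the locking depth $d_u$ and the subtree depth-range fields $a_u,b_u$ used to find the split points can each be recomputed in $O(1)$ time per rotation, so that the augmentation does not asymptotically inflate the cost. (The stronger remark that a join costs only $O(|h_1-h_2|)$ in the tree heights is not needed for the asymptotic bound, since each height is already $O(\lg\lg n)$.) I expect this verification—that maintaining the three augmented fields through the rotations of split and concatenate stays $O(1)$ per rotation—to be the main technical obstacle.

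Finally I would combine the pieces: $k+1$ searches at $O(1+\lg\lg n)$ each together with $k$ cut/join operations at $O(\lg\lg n)$ each sum to $O((k+1)(1+\lg\lg n))$, establishing the claimed running time.
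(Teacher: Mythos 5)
Your proposal is correct and follows essentially the same route as the paper's proof: the same decomposition into search cost plus restructuring cost, the same key size bound (auxiliary trees hold preferred paths of $O(\lg n)$ nodes, hence red-black operations cost $O(\lg \lg n)$), the same count of $k+1$ auxiliary trees touched, and the same charging of cuts/joins to split/concatenate operations. Your added care about maintaining the augmented fields $d_u, a_u, b_u$ in $O(1)$ per rotation is a detail the paper delegates to its citation for the cut/join implementation, not a difference in approach.
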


\begin{proof}
    First, note that the access consists of two parts --- first, traversing the tango tree from the root to $x_i$; second, rearranging the tango tree by updating preferred child pointers and rebalancing auxiliary trees. 

    First, let us consider the cost of the search from the root to $x_i$. Since each auxiliary tree represents a preferred path in the complete binary search tree (which is perfectly balanced), we can see that each auxiliary tree in the tango tree has $O (\lg n)$ elements. And, since auxiliary tree is implemented using a red-black tree which is balanced, the cost to search through any given auxiliary tree is $O(\lg \lg n)$. Since we update $k$ preferred child pointers, we must have touched at most $k+1$ distinct auxiliary trees. So, the total cost of searching is $O((k+1) (1 + \lg \lg n))$.

    Rearranging the tango tree involves updating $k$ preferred child pointers and rebalancing up to $k+1$ auxiliary trees. As described in \ref{sec:tango-algo}, updating preferred child pointers just involves a constant number of cuts, joins, and rebalancings, each of which takes $O (\lg \lg n)$ on the auxiliary red-black trees (since each has $O (\lg n)$ elements) \cite{cutjoin}. So, the cost of rearranging the tango tree during access is also $O((k+1)(1 + \lg \lg n))$.
\end{proof}

\begin{theorem}
    The total running time of the tango tree algorithm on an access sequence $X$ of size $m$ is $O((\OPT(X) + n)(1 + \lg \lg n))$.
\end{theorem}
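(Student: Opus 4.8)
The plan is to sum the per-access bound of Lemma~\ref{lma:accesstime} over the whole sequence and then relate the accumulated work to the interleave bound, which Theorem~\ref{thm:ilb} ties to $\OPT(X)$. Write $k_i$ for the number of preferred child pointers updated during the access to $x_i$. Lemma~\ref{lma:accesstime} gives a running time of $O((k_i+1)(1+\lg\lg n))$ for that access, so the total running time is $O\bigl(\bigl(\sum_{i=1}^m k_i + m\bigr)(1+\lg\lg n)\bigr)$.

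The key observation is that the $k_i$ appearing in Lemma~\ref{lma:accesstime} is exactly the quantity $k$ of Lemma~\ref{lma:interleave}: a preferred child pointer on the path from the root of $P$ to $x_i$ is updated precisely at each non-preferred edge traversed, since traversing a non-preferred edge is what flips the preferred child. Hence each $k_i$ equals the increase in the interleave bound caused by the access to $x_i$. Summing telescopes: $\sum_{i=1}^m k_i$ equals the final interleave bound minus its initial value, which is at most $\IB(X)$ (the interleaving number of every node is $0$ before any access). Thus the total running time is $O\bigl((\IB(X)+m)(1+\lg\lg n)\bigr)$.

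To finish, I would convert $\IB(X)$ and $m$ into $\OPT(X)+n$. From Theorem~\ref{thm:ilb}, $\IB(X) \le 2\OPT(X)+2n$. Moreover $m \le \OPT(X)$, since the optimal offline BST pays at least $1$ for each of the $m$ accesses (the pointer starts at the root on every access). Substituting both bounds yields $O\bigl((\OPT(X)+n)(1+\lg\lg n)\bigr)$, as claimed.

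The step I would treat most carefully is the identification of the two uses of ``$k$'': one must argue that the cuts and joins the access algorithm performs correspond one-to-one with the non-preferred edges on the root-to-$x_i$ path in $P$, so that the algorithmic cost parameter of Lemma~\ref{lma:accesstime} coincides with the interleave-increase parameter of Lemma~\ref{lma:interleave}. Everything else --- the telescoping and the two inequalities $\IB(X) \le 2(\OPT(X)+n)$ and $m \le \OPT(X)$ --- is routine, though I would note that the additive $n$ (rather than merely $\OPT(X)$) is genuinely needed, to absorb both the $-n$ slack in the interleave lower bound and the initial configuration of the preferred paths.
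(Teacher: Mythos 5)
Your overall architecture matches the paper's proof: sum Lemma~\ref{lma:accesstime} over accesses, identify pointer updates with interleave increases via Lemma~\ref{lma:interleave}, then convert $\IB(X)$ and $m$ into $\OPT(X)+n$ using Theorem~\ref{thm:ilb} and the trivial bound $\OPT(X) \geq m$. However, the step you yourself flagged as delicate contains a genuine error: it is \emph{not} true that each $k_i$ equals the increase in the interleave bound, and consequently the telescoping claim $\sum_{i=1}^m k_i \leq \IB(X)$ is false. The discrepancy comes from nodes whose preferred child is being set for the \emph{first} time (from ``no preferred child'' to left or right). Such an event is a preferred child pointer update --- the algorithm really does perform a cut/join costing $\Theta(1+\lg\lg n)$, so it must be counted in $k_i$ --- but it does not correspond to any switch between $L(v)$ and $R(v)$ in the access sequence, so it contributes nothing to $\IB(X)$. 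Concretely: take $X$ to consist of a single access to a leaf of $P$, starting from the initial configuration in which every auxiliary tree is a singleton. Then $\IB(X)=0$, yet the access updates $\Theta(\lg n)$ preferred child pointers, so $\sum_i k_i = \Theta(\lg n) > \IB(X)$. The correct statement is $\sum_i k_i \leq \IB(X) + n$, since each of the $n$ nodes can undergo at most one ``first-time'' setting over the entire sequence.

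This is exactly the point the paper's proof treats explicitly: it splits pointer updates into left-to-right/right-to-left switches, which are in bijection with interleave increases, and first-time settings, of which there are at most $n$ in total, each costing $O(1+\lg\lg n)$; the latter are charged against the additive $n$ in the theorem statement. Your closing remark that the additive $n$ must absorb ``the initial configuration of the preferred paths'' shows you sensed this, but your proof body never implements it --- as written, the additive $n$ in your argument arises only from the $-n$ slack in Theorem~\ref{thm:ilb}. The repair is small: replace $\sum_i k_i \leq \IB(X)$ by $\sum_i k_i \leq \IB(X) + n$, after which your chain of inequalities goes through verbatim and yields $O((\OPT(X)+n)(1+\lg\lg n))$ as claimed.
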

\begin{proof}
    Let us suppose that all preferred child pointer updates in access $x_i$ are from left-to-right or right-to-left (as opposed to being set to either left or right from previously having no preferred child). Then, we can see that for access $x_i$, the number of updated preferred child pointers is precisely equal to the number of additional interleaves due to access $x_i$, as per Lemma \ref{lma:interleave}. We will let $\IB_i(X)$ be the number of additional interleaves due to access $x_i$. Note that $\sum_i \IB_i(X) = \IB(X)$.

    Since each node starts with no preferred child, there are at most $n$ preferred child updates that are not left-to-right or right-to-left, each with cost $O(1 + \lg \lg n)$. So, from Lemma \ref{lma:accesstime}, we see that the total access time for access sequence $X$ is 
    
    \begin{align*}
        &O\left( \left (\sum_i (\IB_i(X) + 1) + n \right)(1 + \lg \lg n)\right)\\
        = \, &O((\IB(X) + n + m)(1 + \lg \lg n)).
    \end{align*}
    
    From Theorem \ref{thm:ilb}, we know that $\OPT(X) \geq \IB(X) / 2 - n$, and we can see that $\OPT(X) \geq m$ is a a trivial lower bound, so we get that the total tango access time is $O((\OPT(X) + n)(1 + \lg \lg n))$.   
\end{proof}

And, we can easily see that the following corollary holds, giving $O(\lg \lg n)$-competitiveness as desired:
\begin{corollary}
\label{cor:competitive}
    When the size of the access sequence $|X| = m$ satisfies $m = \Omega(n)$, the running time of the tango tree algorithm is $O(\OPT(X)(1 + \lg \lg n))$. 
\end{corollary}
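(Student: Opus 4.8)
The plan is to derive the corollary directly from the preceding theorem, which already establishes that the total running time is $O((\OPT(X) + n)(1 + \lg \lg n))$. The only work remaining is to show that, under the hypothesis $m = \Omega(n)$, the additive $n$ term is asymptotically dominated by $\OPT(X)$, so it can be absorbed without changing the bound.

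First I would recall the trivial lower bound $\OPT(X) \geq m$, which holds because any BST pays at least unit cost per access and there are $m$ accesses in total; this was already invoked in the proof of the theorem. Next I would unpack the hypothesis $m = \Omega(n)$: by definition there is a constant $c > 0$ and a threshold beyond which $m \geq c n$. Chaining these two facts gives $n \leq m/c \leq \OPT(X)/c$, so that $n = O(\OPT(X))$.

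With this in hand the final step is immediate: $\OPT(X) + n = O(\OPT(X))$, and substituting into the theorem's bound yields the claimed $O(\OPT(X)(1 + \lg \lg n))$. There is no genuine obstacle here, since the whole argument amounts to absorbing a lower-order term; the one point worth stating carefully is why $\OPT(X) \geq m$, as it is precisely this inequality that lets the access count $m$, and hence $n$ via the hypothesis, be charged against the optimum rather than surviving as a separate additive cost.
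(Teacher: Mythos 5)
Your proposal is correct and matches the paper's intent exactly: the paper leaves this corollary as immediate ("we can easily see"), and the intended argument is precisely yours---use the trivial bound $\OPT(X) \geq m$ together with $m = \Omega(n)$ to conclude $n = O(\OPT(X))$, so the additive $n$ in the theorem's bound $O((\OPT(X)+n)(1+\lg\lg n))$ is absorbed. Nothing is missing; your write-up just makes explicit the one-line absorption step the paper omits.
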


\subsection{Worst-case Access Sequences for Tango Trees}
\label{sec:worst-case}

While the $O(\lg \lg n)$-competitiveness of the tango tree is an important result, the worst case competitive running time does exist on certain access sequences. Particularly, let us consider the sequential access, in which we have the access $X = 1, 2, \dots, n $. It is known that several BSTs achieve $O(n)$ total running time on the sequential access, including splay trees \cite{seq-splay}, which is optimal. However, we can show that the tango tree has $\Theta(n \lg \lg n)$ running time on the sequential access.  While several papers refer to this fact (e.g. \cite{mst}), we have yet to see a formal proof or explanation of the above. The full proof is worked through in Appendix \ref{sec:tango-worst-case}, but we present some intuition here. 

The key limiting factor in the tango tree is the fact that navigating to any leaf of a red-black tree of size $k$ takes $\Theta(\lg k)$ time. This means that each traversal between auxiliary trees takes $\Theta(\lg |\AT|)$ time, which introduces an overhead of $\Theta(\lg \lg n)$ from the optimal BST access time on many accesses (including the sequential access). That is, the use of red-black trees for the auxiliary trees severely limits the ability of the tango tree to achieve optimality on many access sequences.

On the other hand, multi-splay trees, which use splay trees for their auxiliary trees instead of red-black trees, are able to achieve $O(n)$ running time on the sequential access \cite{mst}. Note that multi-splay trees achieve this running time while also achieving $O(\lg \lg n)$-competitiveness. 

Motivated by this distinction, in the next sections we focus on comparing the performance of the tango tree with the multi-splay tree and the splay tree. 

\section{Experimental Methodology}

We present, what is, to the best knowledge of the authors, the first experimental implementation of tango trees. We also implement multi-splay trees and splay trees to compare with the tango tree. We chose to study the multi-splay tree and splay tree alongside the tango tree because the splay tree is widely conjectured to be dynamically optimal, and because the multi-splay tree is another $O(\lg \lg n)$-competitive BST which is also conjectured to be dynamically optimal. We don't cover the details of the splay tree and multi-splay tree here for brevity, but instead refer readers to the original papers \cite{mst} and \cite{splay}. We will note that the multi-splay tree is extremely similar in design to the tango tree except that it uses splay trees as its auxiliary trees rather than red-black trees.  

As noted above, we present a complete implementation of the tango tree algorithm for the first time in this paper. However, previous work has been done on implementing the multi-splay and splay tree algorithms, which we replicate here for comparison purposes. Particularly noteworthy is the multi-splay tree implementation by Altamarino and Rule \cite{ms-impl}, which we use as a reference for our own implementation. Altamarino and Rule also give an incomplete implementation of tango trees, which served as a good reference material for us to understand some of the challenges in implementing tango trees. 

We implemented all three data structures using C++ because of its fast performance, simple use of pointers, and familiarity to the authors.
In order to account for differences in platforms, instead of measuring the physical running time of each of the algorithms, we instead count the number of unit-cost BST operations executed by each algorithm. Recall from Section \ref{sec:bst} that a unit-cost BST operation involves either following node pointers (left/right child, parent pointers) or executing a rotation on a node. As a result, our implementation effectively measure the sum total number of (not necessarily distinct) nodes touched and number of rotations executed during an access sequence. In particular, we do not measure the cost of modifying the internal data stored by each node. We chose to measure unit cost BST operations instead of running time to provide verifiable platform independent results, and because our focus is more on the large trends affecting algorithm performance rather than on system-level optimizations that may provide constant factor speed-ups. 

We implement a number of different test sequences with the intent of measuring specific properties of each of the BST algorithms. Our goal is to create a family of access sequences which can be used to provide insights into whether a given BST algorithm is performing optimally. We list each of these tested properties below and provide some explanation into the structure of the access sequences used as well.

Our full implementations and experiments are available open-source on GitHub \cite{github}.

\subsection{Sequential Access Property}

As described in Section \ref{sec:worst-case}, the sequential access is simply the access sequence $X = 1, 2, \dots, n$. The sequential access property is the property that a BST can complete the sequential access in $O(n)$ total time (or equivalently, in $O(1)$ amortized cost per operation). We test this property by making several passes\footnote{The number of passes is a constant, in this case, we use 25 passes.} over the sequential access and measuring the cost per access as the total cost of the access sequence divided by the total number of accesses. We repeat this test as we vary $n$, so this normalization step is important to allowing us to compare amortized cost per operation. Otherwise, there is a dominating factor of $n$ in the sequence access cost which just represents the fact that there are $\Theta(n)$ total accesses made. We use this normalization to amortized cost idea in other tests as well.

\subsection{Random Access}
In this test, we do not test a particular property, but rather test the BST algorithms on a uniformly random access sequence. That is, we make $m$ accesses to a BST on $n$ keys where each access is chosen uniformly at random from the universe of keys $\{ 1, \dots, n\}$. We use $m = 25 \cdot n$, and for each BST, we measure the average cost per access as $n$ varies. 

\subsection{Working Set Property}

The working set property states that if $t$ distinct keys are accessed since the last access to some key $x$, then key $x$ can be accessed in $O(\lg t)$. Satisfying this property implies that if there is a working set of keys $S$ such that all accesses in a sequence satisfy $x_i \in S$, the BST can complete these accesses in $O(\lg |S|)$ time per operation. 

So, in order to test the working set property we randomly partition the key universe into $n/k$ sets of $k$ keys each. Then, for each set of $k$ keys, we make several accesses to each element in the working set.\footnote{For a working set of size $k$, we make $100 \cdot k$ accesses, or 100 accesses per element.} We measure the average cost per access for each data structure as we vary both $n$ and $k$. If a BST satisfies the working set property, we expect its cost per operation to be $O(\lg k)$ (i.e. essentially no dependence on $n$ even as it varies). 

In our experiments, the results for multi-splay trees were hard to interpret. So, we also ran a specific experiments fixing $k=2$ and $k=4$, with much finer resolution on values of $n$ than the initial experiment. The motivation behind this experiment and the results are explained in Section \ref{sec:res-wset}.

\subsection{Unified Property}

The unified property states that if $t_{ij}$ distinct keys are accessed between accessing $x_i$ and $x_j$, then any $x_j$ can be accessed in $O(\lg( \min_i t_{ij} + |x_i - x_j| + 2))$. Note that the unified property implies the working set property.

To check if a data structure has the unified property, we divide the total number of elements into chunks of size $2k$, where $k$ is a parameter we choose. We then group these chunks into sets of $k$ chunks each, spread uniformly. We then iterate over each set of chunks and access specific elements in a specific order. During iteration $2i$, we access the $i$th element of each chunk in increasing order, and during iteration $2i+1$, we access the $(k+i)$th element of each chunk in increasing order. This guarantees that when we access an element $x_j$, there exists another element $x_i$ such that $t_{ij}=k$ (where $t_{ij}$ is as defined above) and $|x_i - x_j| < k$ (ignoring the first iteration), which should take $O(\lg k)$ time per access if the unified property holds. We verified our code is correct by calculating the average unified bound for the access sequence. Note this access sequence is only correct up to $O(\sqrt{N})$ as above that point, we can't divide them into enough groups.

However, note that a data structure that efficiently processes this specific access sequence may not necessarily have the unified property since this sequence has a specific structure and is not randomly chosen from all access sequences where the unified property holds for some $k$.

\section{Results and Discussion}

The full data collected for each test is available on our project GitHub repository \cite{github}. Here we summarize the results in graphical format and discuss some of the implications. 

\subsection{Sequential Access}
The average access costs for sequential access on $n$ keys is shown in Fig. \ref{fig:seq}. Both the splay tree \cite{seq-splay} and the multi-splay tree \cite{mst} are known to take $O(1)$ amortized time per access on the sequential access, while we proved that the tango tree takes $\Theta(\lg \lg n)$ amortized time per access. As expected, then, the access cost for the tango tree is higher than both the multi-splay tree and splay tree on the sequential access. And particularly, the average cost for the tango tree grows roughly linearly with $\lg \lg n$, while the splay and multi-splay costs are roughly constant. 

In order to demonstrate further, we also plot the ratio of the tango tree cost to the multi-splay and splay tree costs in Fig. \ref{fig:seq-ratio}. Here, we see that the data does indeed support a $\Theta (\lg \lg n)$ ratio between the tango tree cost and the other algorithms since the cost ratio is roughly linear in $\lg \lg n$. We also see that, although the multi-splay and splay trees both have constant cost per access, the splay tree is more efficient in practice, requiring roughly half as many operations per access. 

This test is mostly used as experimental verification of our analytical result from Section \ref{sec:worst-case}, and also as a sanity check of the correctness of our implementations.

\begin{figure}[ht]
    \centering
    \begin{subfigure}[t]{0.49\textwidth}
        \centering
        \includegraphics[height=2in]{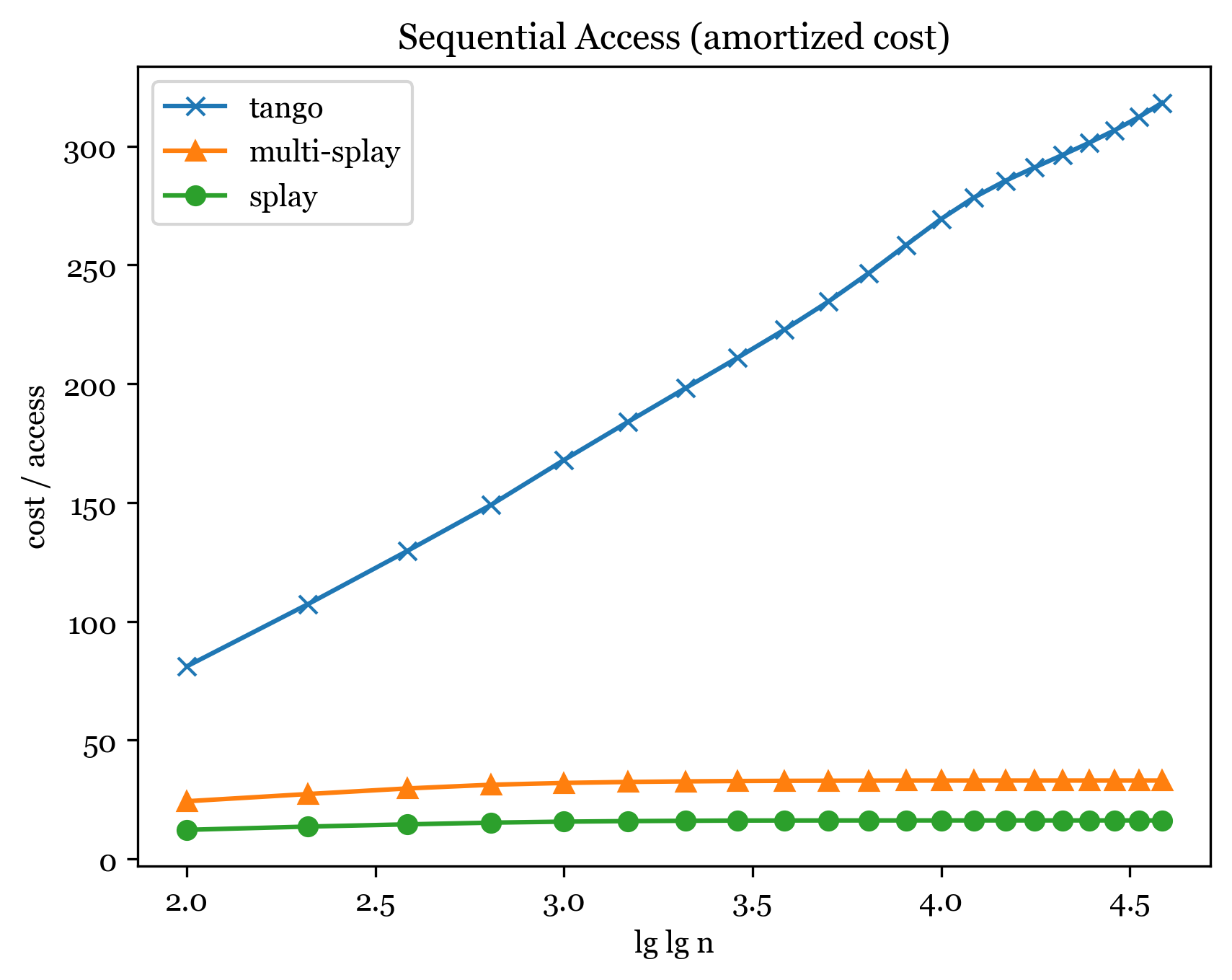}
        \caption{Average sequential access costs for each of tango tree, multi-splay tree, and splay tree.}
        \label{fig:seq}
    \end{subfigure}
    \hfill
    \begin{subfigure}[t]{0.49\textwidth}
        \centering
        \includegraphics[height=2in]{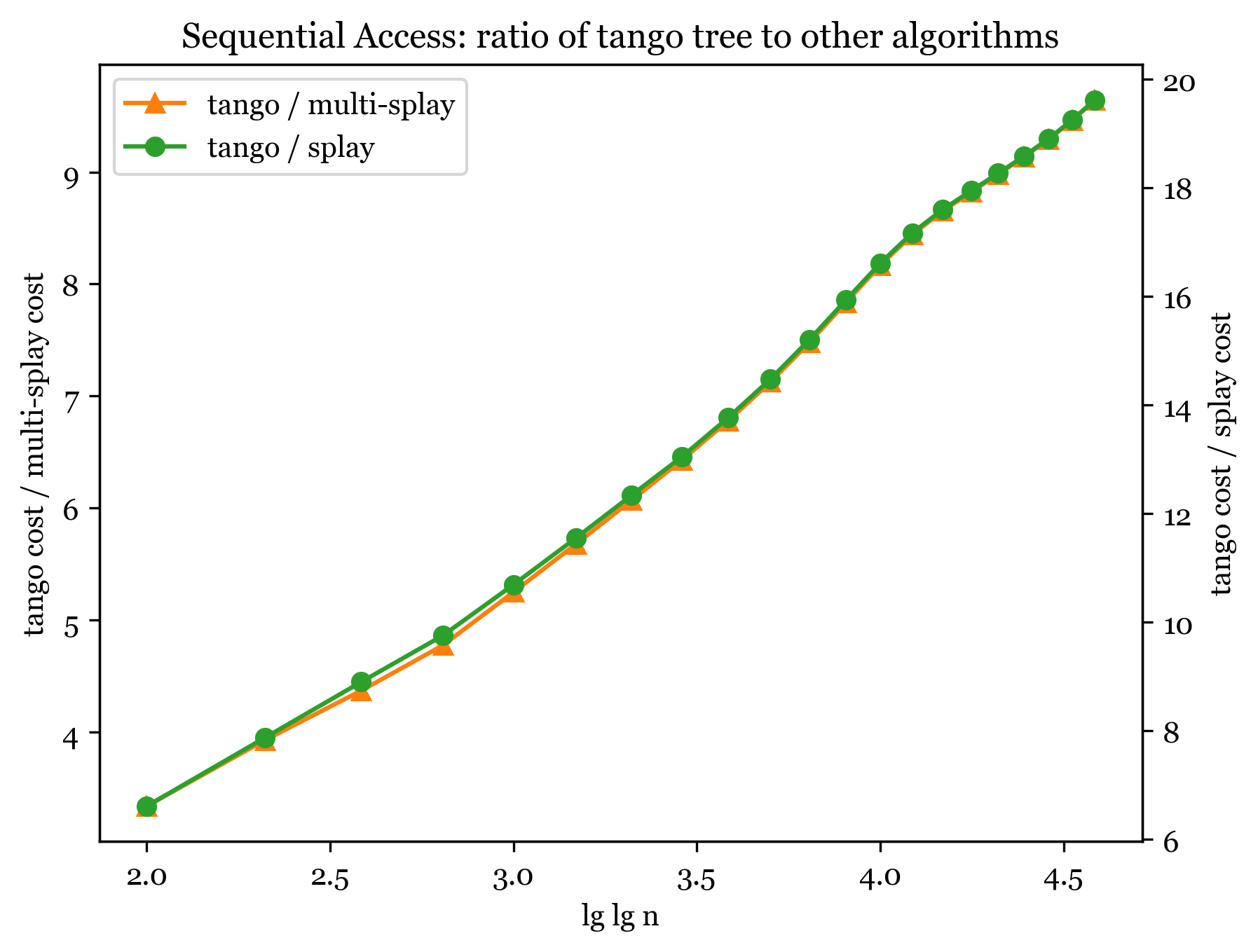}
        \caption{Ratio of the tango tree access costs with the multi-splay tree and splay tree costs. 
        }
        \label{fig:seq-ratio}
    \end{subfigure}
    \caption{Results on the sequential access with $n$ keys. 
    }
\end{figure}

\subsection{Random Access}

It is known that on a uniformly random access sequence, the splay tree takes time $O(\lg n)$ per access \cite{splay}. Additionally, Wang et al. claim that the tango tree takes average time $\Theta(\lg n \lg \lg n)$ per access on a random access pattern \cite{mst}. So, this test allows us to experimentally verify this claim, as well as understand where the running time of the multi-splay tree lies respective to the other two algorithms.

In Fig. \ref{fig:random}, we plot the average cost per access for each of the three BST algorithms. As with other access sequences, the splay tree greatly outperforms the multi-splay tree, which in turn greatly outperforms the tango tree. In order to understand the relationship between the running times better, we also plot the ratio of access costs between tango tree and the other algorithms in Fig. \ref{fig:random-ratio}. We can clearly see that both the (tango / multi-splay) and (tango / splay) ratios are linear in $\lg \lg n$, which would confirm that the tango tree has average cost $\Theta(\lg n \lg \lg n)$ per access. The data also seems to suggest that the multi-splay tree has average cost $\Theta(\lg n)$ per access on the random access, giving experimental support to the conjecture that the multi-splay tree is dynamically optimal \cite{mst}.

\begin{figure}[ht]
    \centering
    \begin{subfigure}[t]{0.49\textwidth}
        \centering
        \includegraphics[height=1.95in]{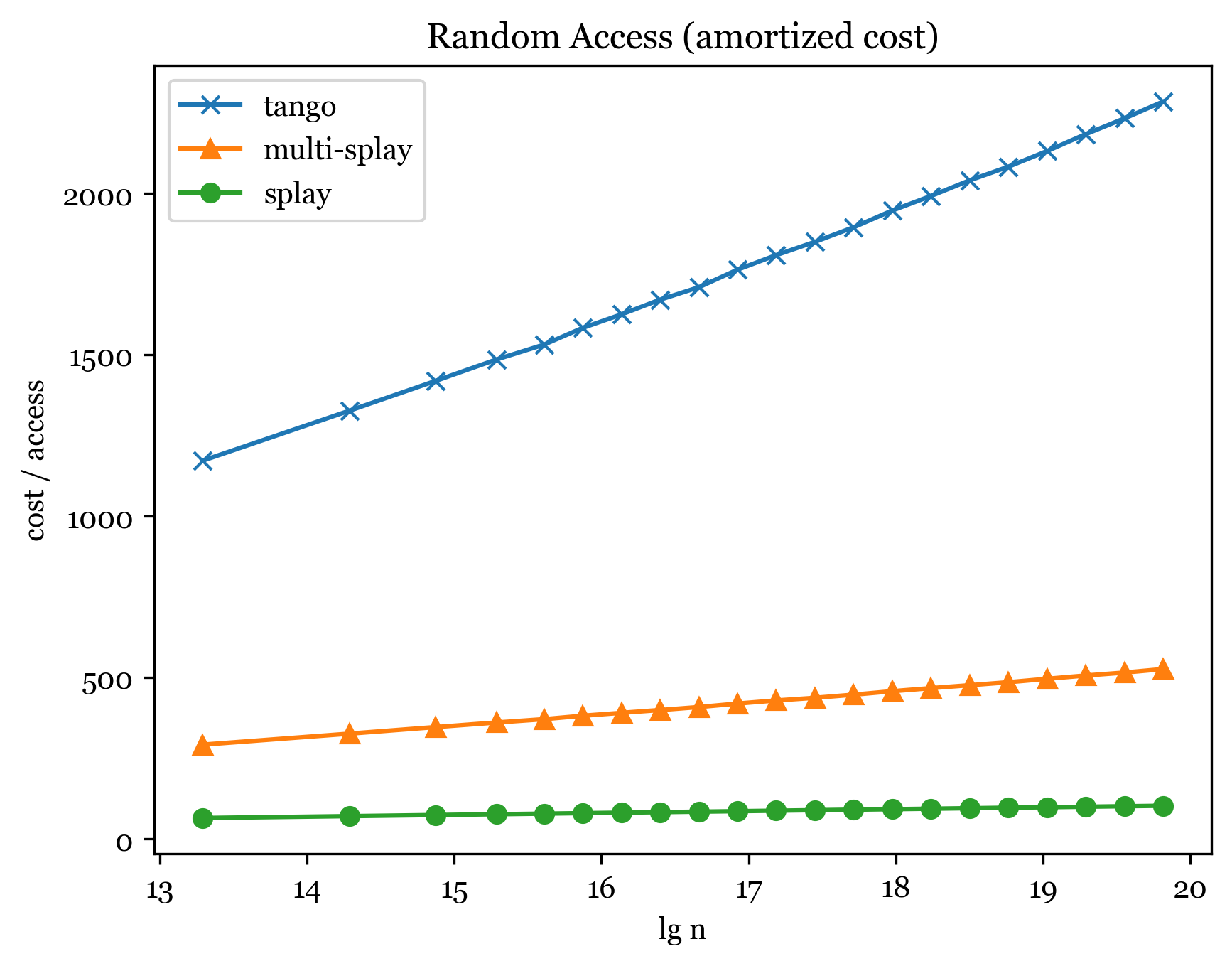}
        \caption{Random access costs for each of tango tree, multi-splay tree, and splay tree. 
        }
        \label{fig:random}
    \end{subfigure}
    \hfill
    \begin{subfigure}[t]{0.49\textwidth}
        \centering
        \includegraphics[height=1.95in]{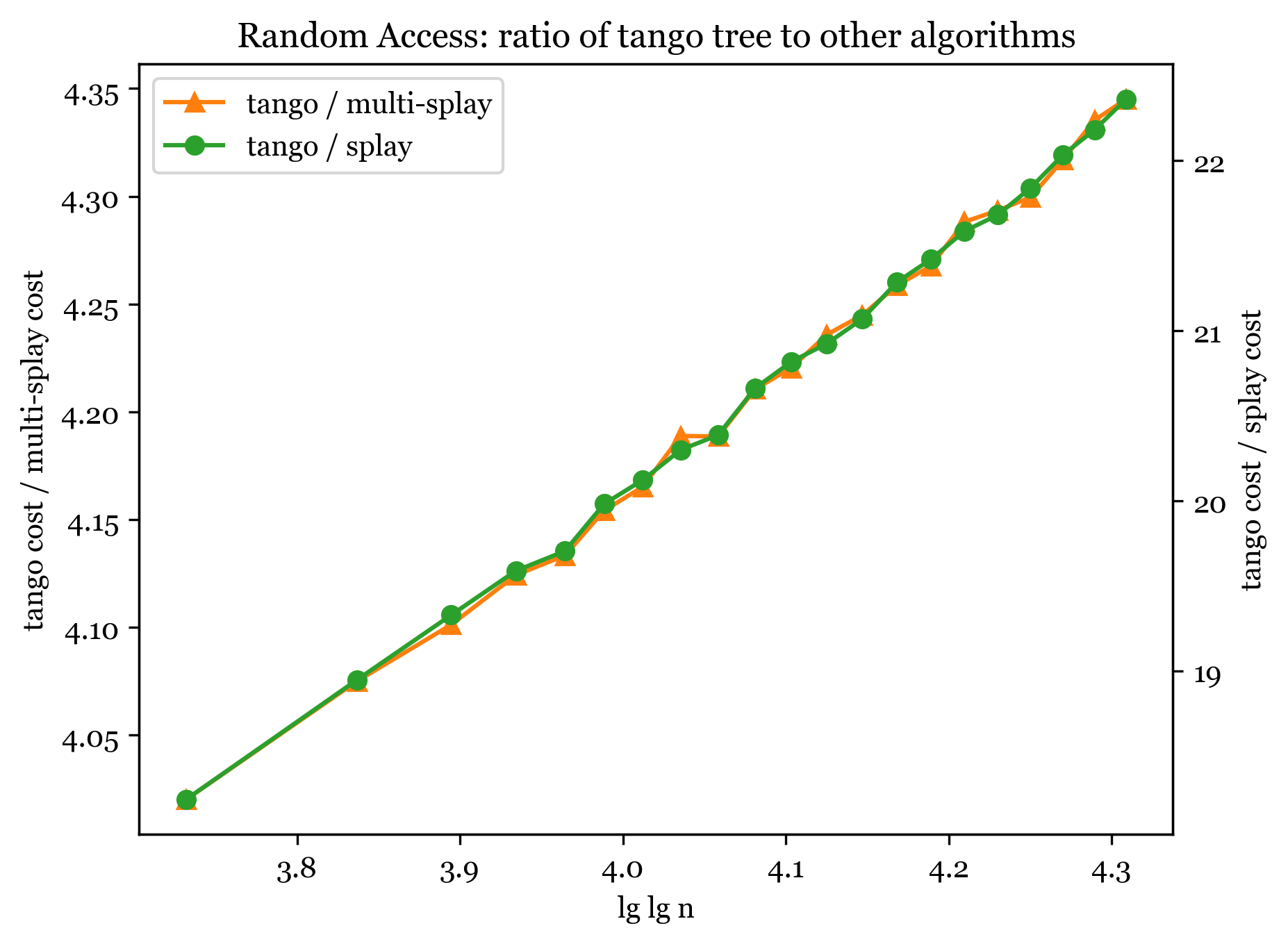}
        \caption{Ratio of the tango tree access costs with the multi-splay tree and splay tree costs. 
        }
        \label{fig:random-ratio}
    \end{subfigure}
    \caption{Results on the random access with $n$ keys.}
\end{figure}

\subsection{Working Set Property}
\label{sec:res-wset}

Both the splay tree \cite{splay} and the multi-splay tree \cite{mst-prop} have been proven to satisfy the working set property. To our best knowledge, there is no tight bound on the running time of tango tree on a working-set, but from the competitiveness we know the average cost per access should be $O(\lg k \lg \lg n)$. So, we expect to see the splay and multi-splay tree to have access costs dependent only on $k$, while the tango tree may have up to an extra multiplicative $O(\lg \lg n)$ factor. 

First, we examine the results on the splay tree, shown in Fig. \ref{fig:working_set_splay} since these are most straightforward. We can see that when we set the size of the working set $k$ constant, the average cost per access for the splay tree does not change as a function of $n$. And, when we instead control for the number of keys $n$, the average cost per access grows linearly with $\lg k$. Both of these are demonstrated precisely in Fig. \ref{fig:working_set_splay}, providing experimental verification of the fact that splay trees satisfy the the working set property.

\begin{figure}[ht]
    \centering
    \includegraphics[width=.49\linewidth]{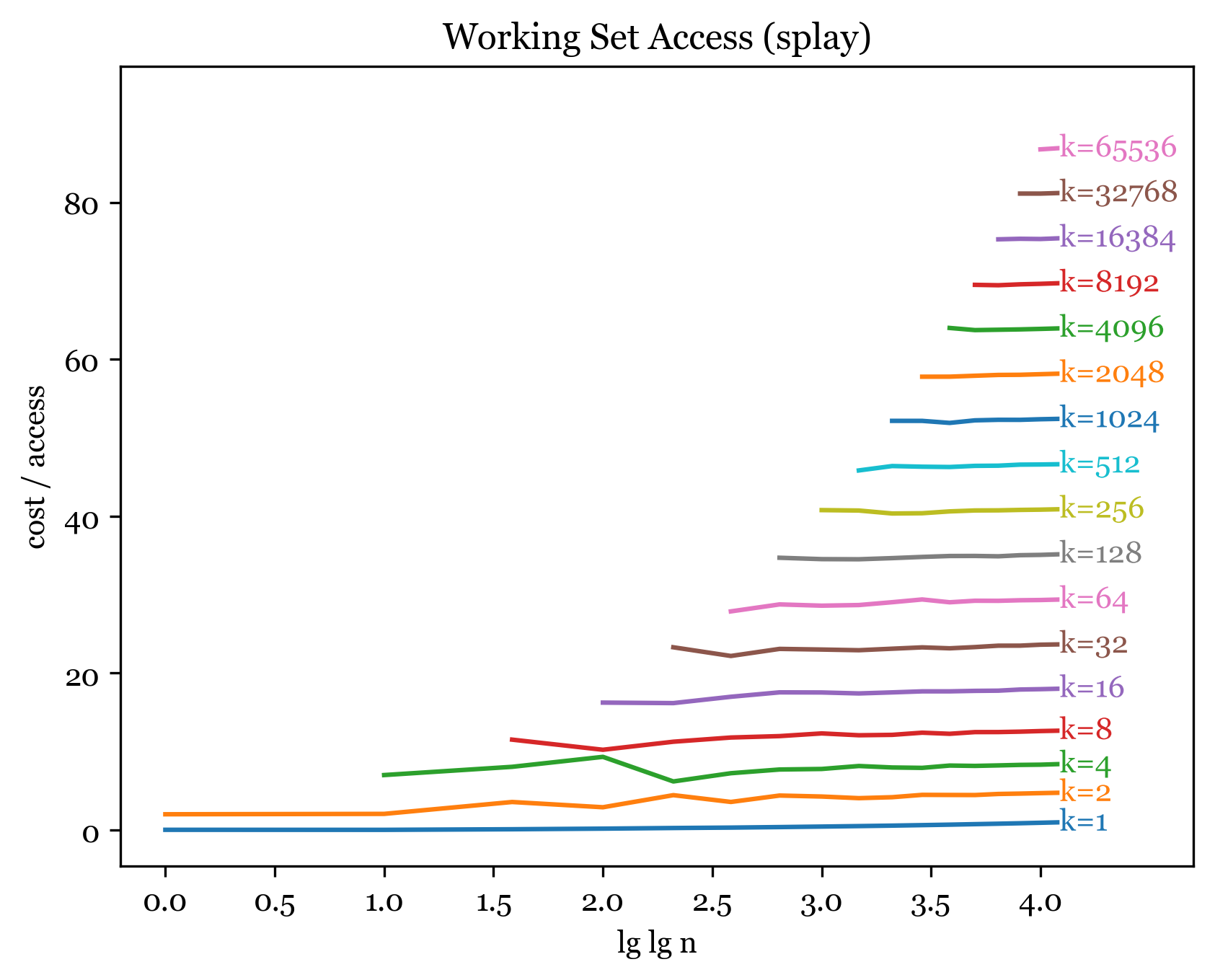}
    \includegraphics[width=.49\linewidth]{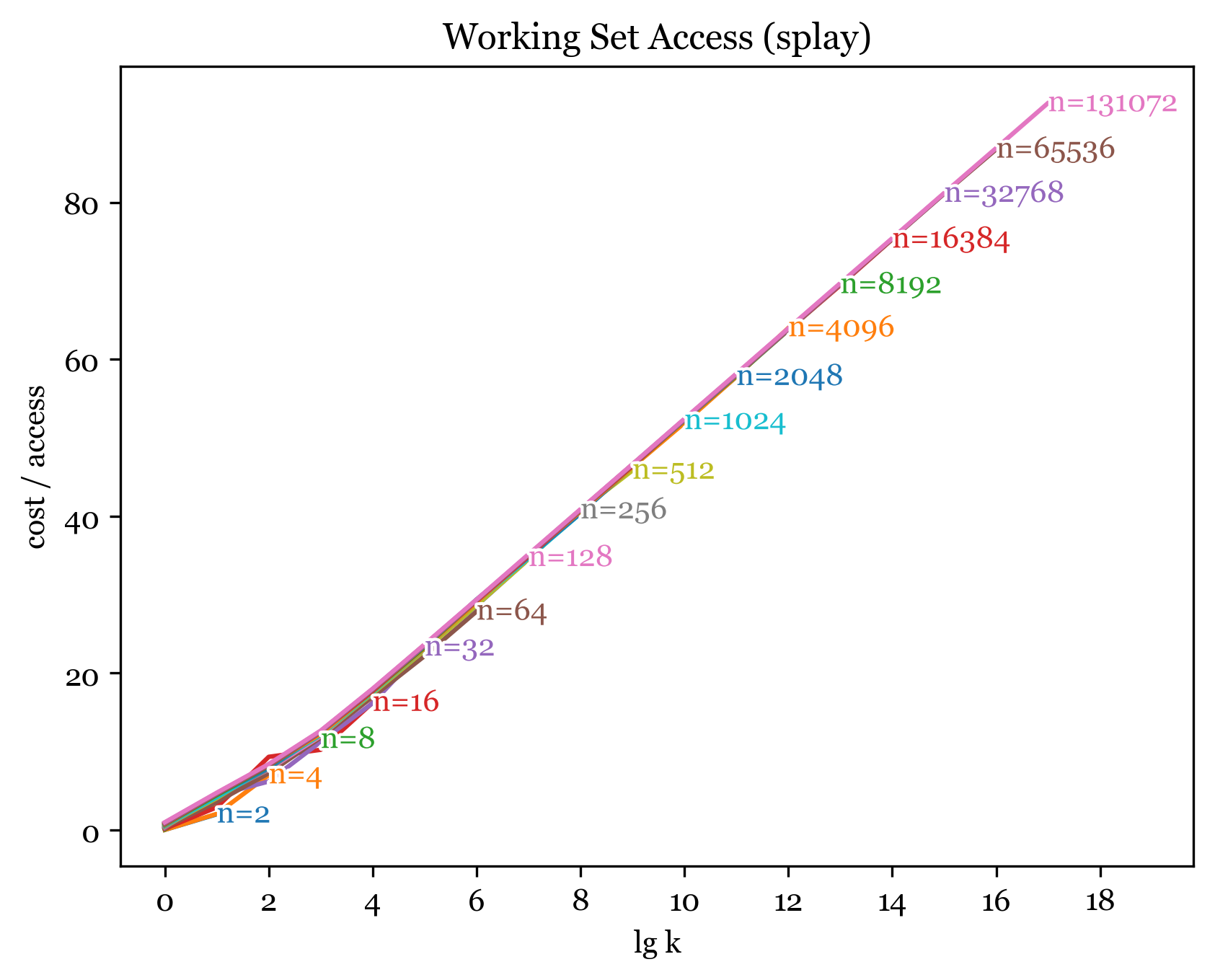}
    \caption{Results of running a working set access sequence on the splay tree, as $n$, the number of keys, and $k$, the size of the working set vary.}
    \label{fig:working_set_splay}
\end{figure}

Now, we turn to tango trees which also lend themselves to easy analysis. As shown, in Fig. \ref{fig:working_set_tango}, when $n$, the number of keys is held constant, the average cost per access grows linearly with $\lg k$, as we might expect. And, when the working set size $k$, is held constant, the average cost per access grows linearly with $\lg \lg n$. So, this indicates that the tango tree likely has running time $\Theta(\lg k \lg \lg n)$ on the working set access.

We can understand why this might be the case by considering the structure of the tango tree, and using insights from our analysis of the sequential access in Section \ref{sec:worst-case}. Because the auxiliary trees used in tango trees are balanced red-black trees, any interleave requires time at least $\Omega(\lg \lg n)$ since we have to navigate to a leaf of depth $\Theta(\lg \lg n)$ in the preferred path of the root of $P$ (since this auxiliary tree has size $\Theta(\lg n)$). So, any time we access a key outside the preferred path, our minimum cost for access is $\Omega(\lg \lg n)$. If we now consider any randomly selected working set of size $k$, and any uniformly random access on this working set, it is intuitively clear that it is very unlikely for these accesses to all land on a single root-to-leaf path of $P$. In particular, we expect $\Theta(k)$ of the selected keys to lie off the preferred path at any given point in time, so that we expect to make $\Theta(m)$ interleaves just on the preferred path of the root for any sequence of $m$ accesses. And, if we accounted for all the other interleaves necessary, it becomes clear why the multiplicative factor of $\Theta(\lg \lg n)$ is present. We don't provide formal proofs, but rather use these explanations to provide intuitive insight for why the tango tree is slow on most inputs (including the working set).

\begin{figure}[ht]
    \centering
    \includegraphics[width=.49\linewidth]{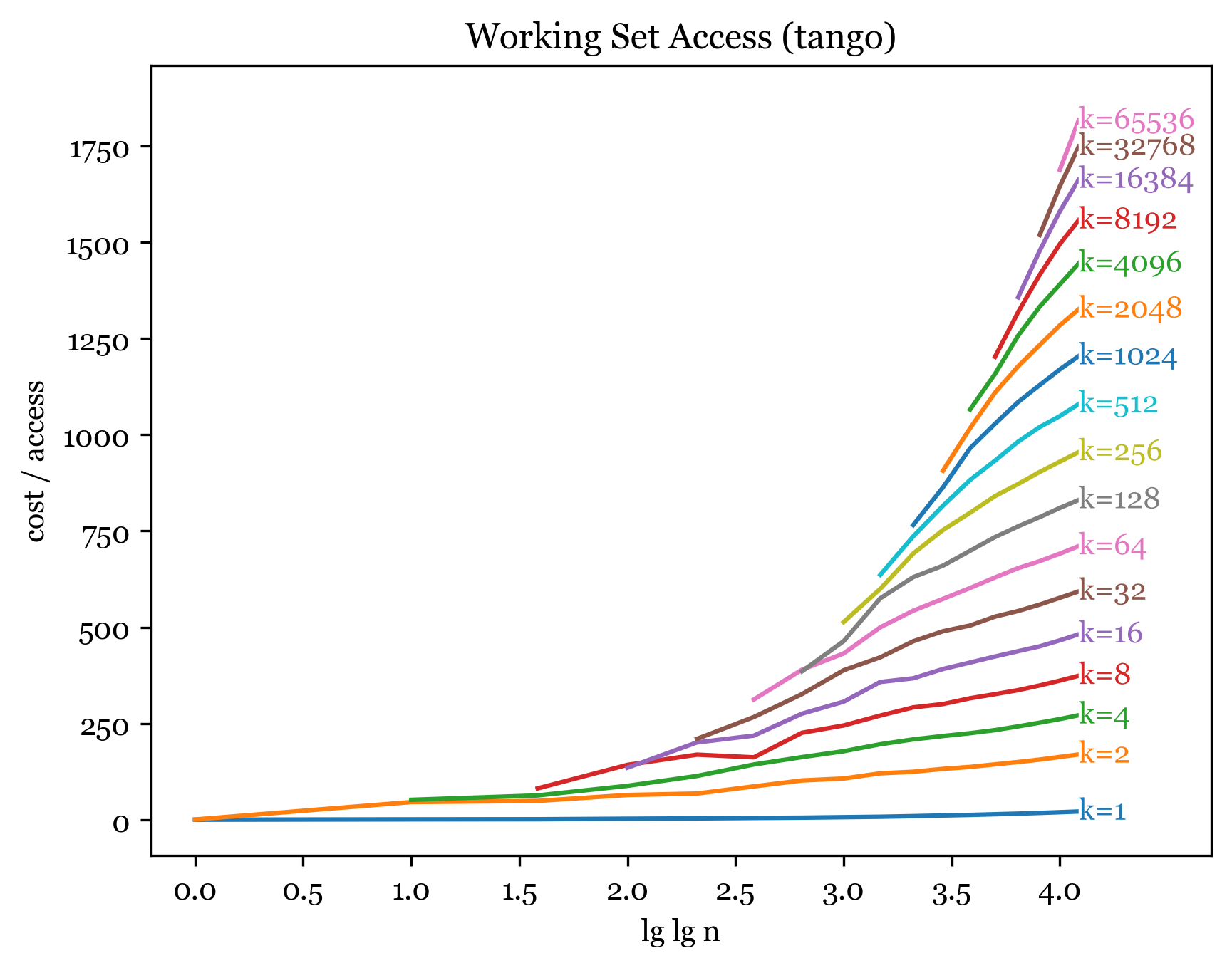}
    \includegraphics[width=.49\linewidth]{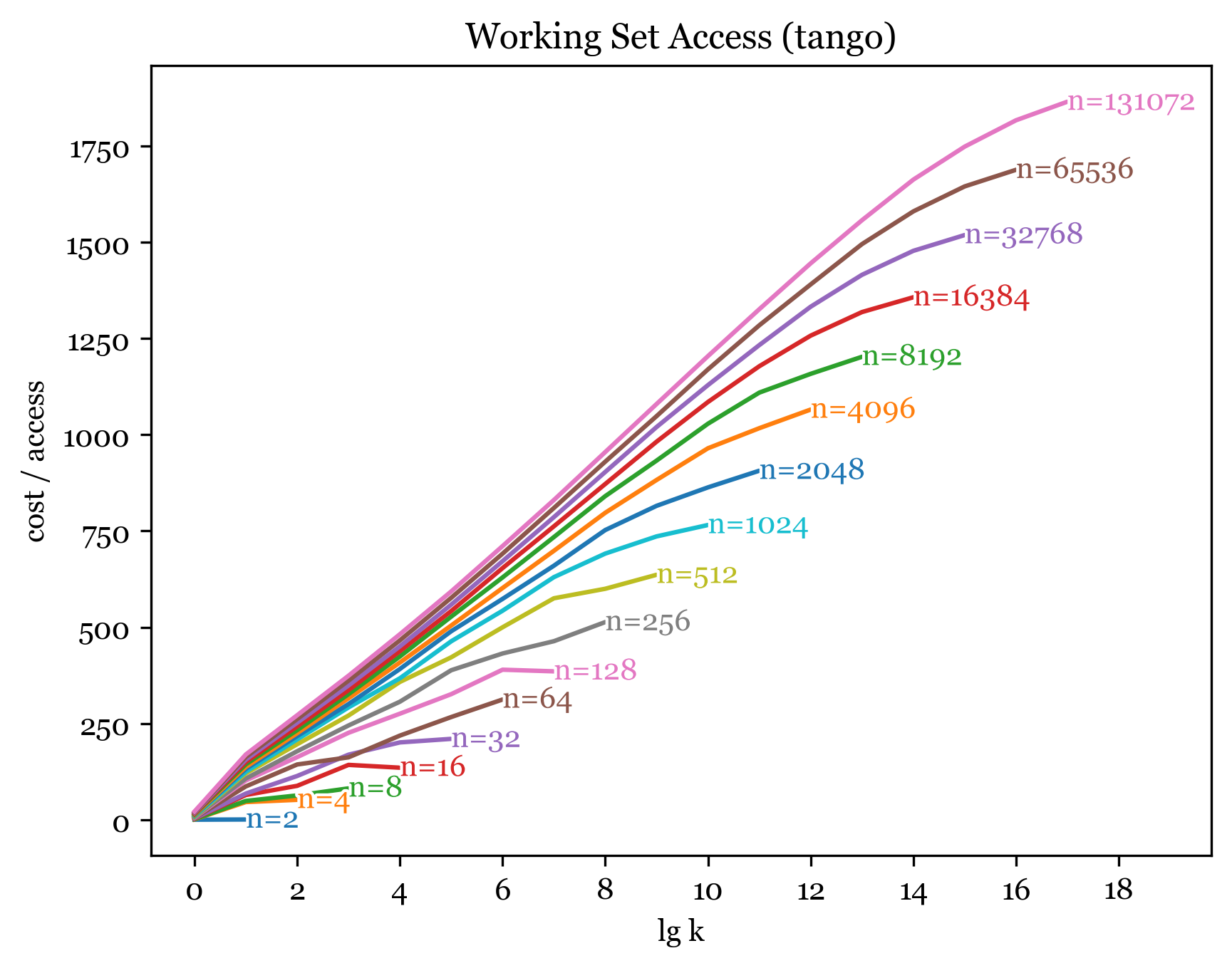}
    \caption{Results of running a working set access sequence on the tango tree, as $n$, the number of keys, and $k$, the size of the working set vary.}
    \label{fig:working_set_tango}
\end{figure}

Now, we turn our attention to the multi-splay tree. In Fig. \ref{fig:working_set_multisplay}, we see that as $n$ is held constant, the average cost per access is roughly linear with $\lg k$, as expected. However, when $k$ is held constant, the average cost per access seems to have some positive correlation with $\lg \lg n$. This is in stark contrast to the results for the splay tree in Fig. \ref{fig:working_set_splay}, which clearly showed no correlation of running time with $\lg \lg n$. Furthermore, this is also in conflict with the theoretical result from Derryberry et al. \cite{mst-prop}, who proved that multi-splay trees do in fact satisfy the working set property. 

\begin{figure}[ht]
    \centering
    \includegraphics[width=.49\linewidth]{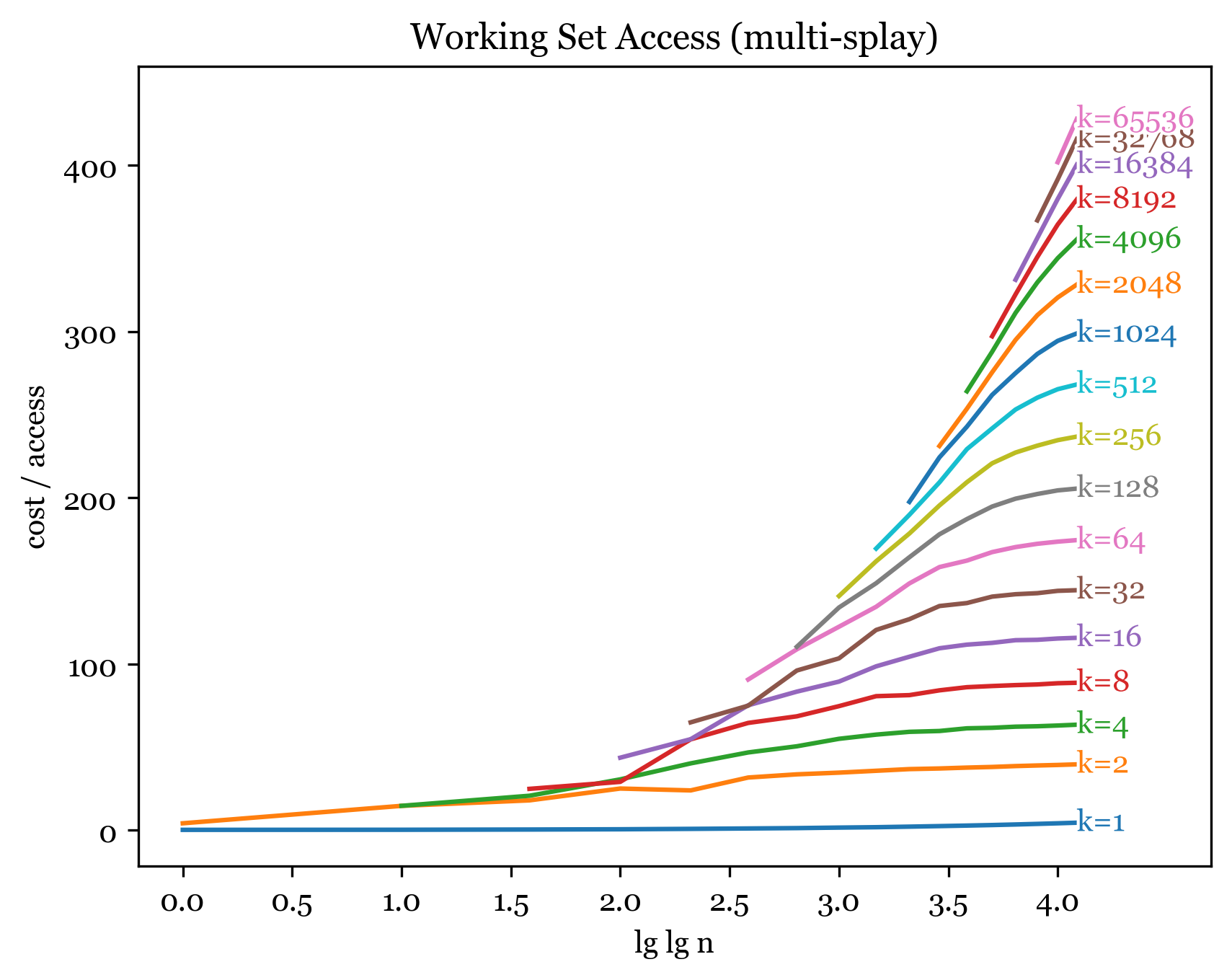}
    \includegraphics[width=.49\linewidth]{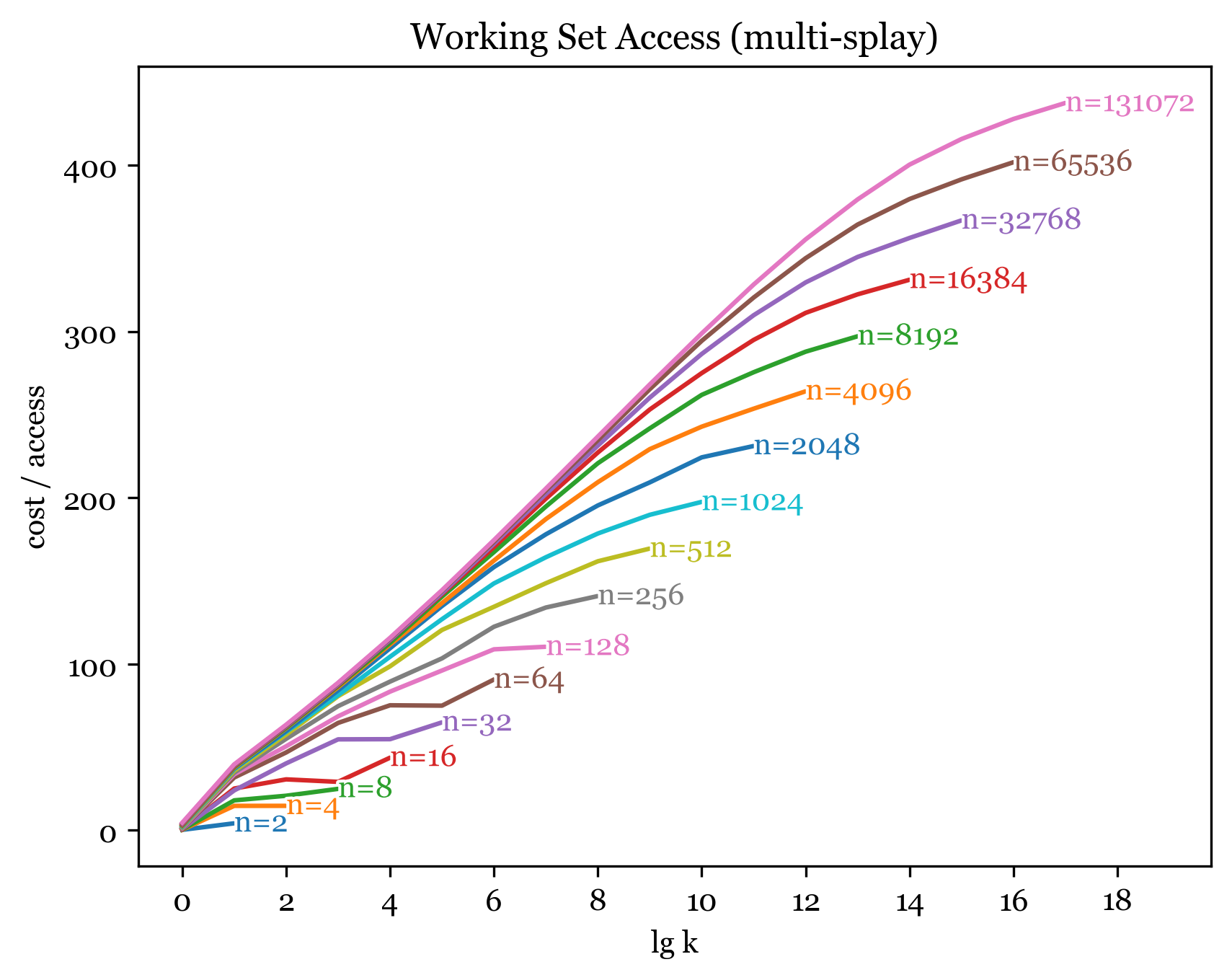}
    \caption{Results of running a working set access sequence on the multi-splay tree, as $n$, the number of keys, and $k$, the size of the working set vary.}
    \label{fig:working_set_multisplay}
\end{figure}

\subsubsection{Further analysis of multi-splay working set access}
\label{sec:ms-wset}

In order to understand this further, we chose to fix $k=2$ and $k=4$ and run accesses for several more values of $n$, spaced evenly in the double log space. The results of this secondary experiment are shown in Fig. \ref{fig:working_set_multisplay_k}. Here, we see that initially there is a positive correlation with the average access cost and $\lg \lg n$, but it levels out beyond $\lg \lg n = 3.5$ (corresponding to $n \approx 2500$). This indicates that the asymptotic running time of multi-splay tree is indeed $O(\lg k)$ per access (i.e. no asymptotic dependence on $n$), in agreement with the theoretical result. 

\begin{figure}[ht]
    \centering
    \includegraphics[width=0.49\linewidth]{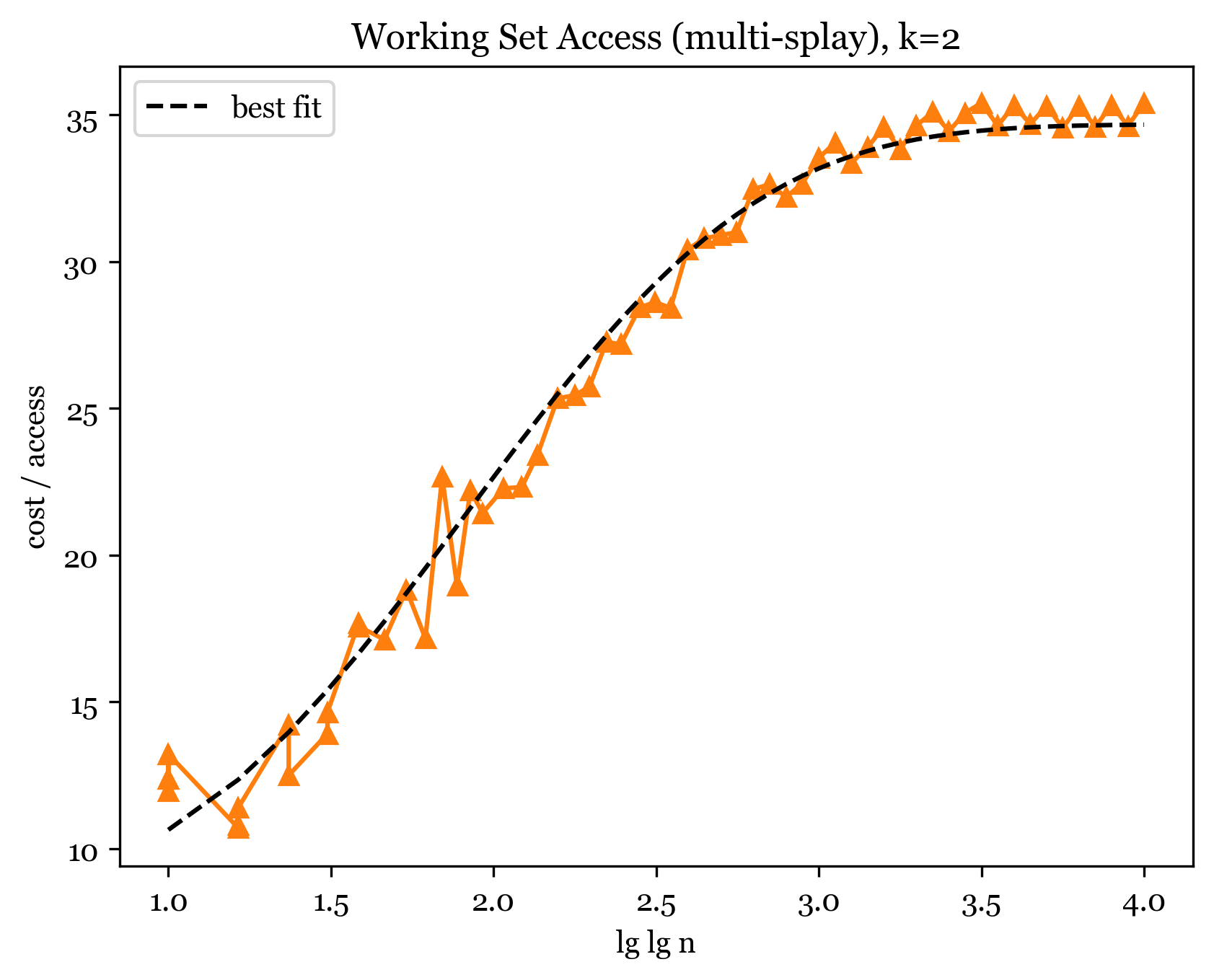}
    \includegraphics[width=0.49\linewidth]{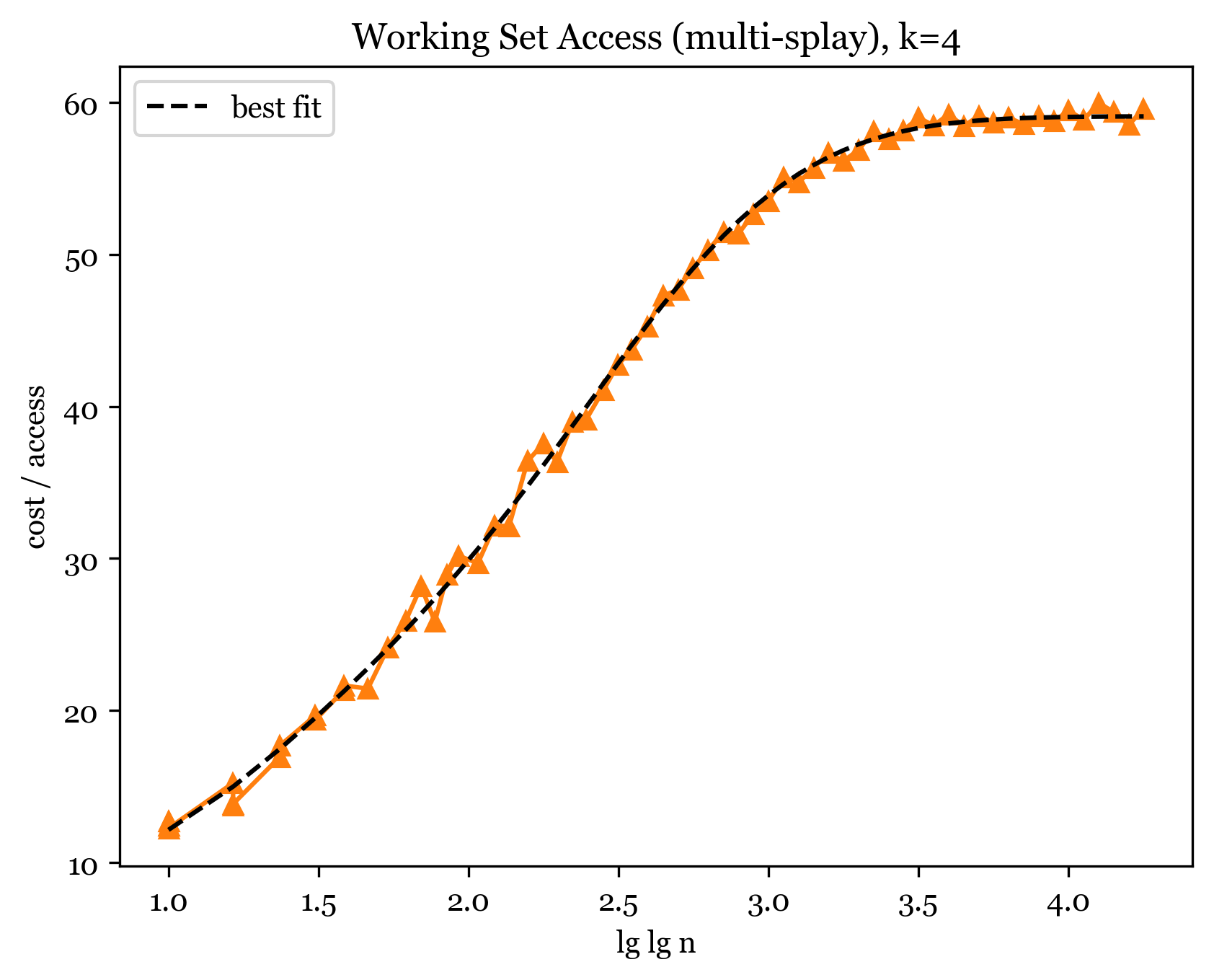}
    \caption{Results of running a working set access sequence on the multi-splay tree, as $n$ varies and $k=2$ or $k=4$ is fixed. Best fit curves are also included; see text and particularly Equations \eqref{eq:fitk2} and \eqref{eq:fitk4} for details.}
    \label{fig:working_set_multisplay_k}
\end{figure}

However, the data does suggest that there are some lower order terms hidden in the running time of the multi-splay tree. We attempt to provide a more precise explanation for the behavior exhibited in Fig. \ref{fig:working_set_multisplay_k}. Notice that since we select our $k$ keys for the working set randomly, some keys may lie on the same preferred path in $P$ with some probability, and thus in the same auxiliary tree in the multi-splay tree. The probability of keys lying in the same auxiliary trees could explain varying access costs. For $k=2$, we can analyze this effect precisely. 

\begin{lemma}
\label{lma:multisplay-k2}
Consider a randomly selected working set $S = \{ x, y \}$ of size 2 and some access sequence on this working set. Let $\rho(n)$ be the probability that $x$ and $y$ lie on the same root-leaf path in $P$. Then, the expected amortized cost per access on the multi-splay tree is $\rho(n)A + (1- \rho(n))B$ for constants $B > A$.
\end{lemma}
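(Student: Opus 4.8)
The plan is to apply the law of total expectation, splitting on the event $E$ that $x$ and $y$ lie on a common root-to-leaf path of $P$, which is equivalent to one being an ancestor of the other and which by definition occurs with probability $\rho(n)$. Let $A$ denote the expected amortized cost per access conditioned on $E$, and $B$ the expected amortized cost per access conditioned on $\overline{E}$. Then the claimed expression $\rho(n)A + (1-\rho(n))B$ is immediate from total expectation. The substance of the lemma is therefore to show that $A$ and $B$ converge to constants independent of $n$ and that $B > A$; I would obtain both by analyzing the steady-state behavior of the alternating access sequence in each case through the lens of Lemma \ref{lma:interleave}.

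First I would treat the case $E$. Suppose $x$ is an ancestor of $y$; after the first access to the deeper key the entire root-to-$y$ path is preferred, so it forms a single preferred path containing both keys. Hence $x$ and $y$ reside in one splay auxiliary tree, and by Lemma \ref{lma:interleave} every subsequent access generates zero new interleaves, since the shared prefix is never disturbed. The cost of each access is then exactly the cost of an ordinary splay within a single auxiliary tree. Because repeatedly splaying two fixed elements drives a splay tree into a periodic steady state whose amortized cost per access is $O(1)$ (the working-set bound for $t = 2$), averaging over the random ancestor--descendant pairs yields a constant $A$.

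Next I would treat $\overline{E}$. Let $w$ be the lowest common ancestor of $x$ and $y$; since neither is an ancestor of the other, $x$ and $y$ lie in opposite subtrees of $w$. In the alternating steady state each access keeps the shared prefix from the root to $w$ preferred but flips the preferred child at $w$, so by Lemma \ref{lma:interleave} \emph{exactly one} new interleave occurs per access. Relative to case $E$, this forces each access to leave the root auxiliary tree, descend a second auxiliary tree to the switching node, and perform a constant number of cut/join operations to repair the preferred-path decomposition. Each of these steps carries positive amortized cost, so the conditional cost is a constant $B$ equal to the pure splay cost of case $E$ plus this strictly positive interleave overhead, which is what gives $B > A$. (This also matches the experiment of Fig. \ref{fig:working_set_multisplay_k}, since $\rho(n) \to 0$ as $n$ grows, so the cost should rise toward $B$ and level off.)

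The main obstacle is making the two constants rigorous and, especially, establishing the \emph{strict} inequality $B > A$. Showing $A$ and $B$ do not depend on $n$ requires arguing that the splay steady-state cost stabilizes once the auxiliary-tree size $\Theta(\lg n)$ is large enough and is, after averaging over key positions, insensitive to the exact placement of $x$ and $y$; I would establish this with a potential-function (access-lemma) argument restricted to the two active keys. The harder point is to prove that the single guaranteed interleave of case $\overline{E}$ contributes a quantity that cannot be absorbed into the case-$E$ amortized cost. I would isolate this by taking the multi-splay potential to be the sum of the auxiliary-tree splay potentials and bounding the cut/join and second-tree search work from below, so that the net amortized charge of one interleave is a fixed positive constant. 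This lower bound on the per-interleave charge is precisely what certifies $B > A$ rather than merely $B \ge A$.
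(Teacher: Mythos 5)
Your proposal follows essentially the same route as the paper's proof: condition on the event that $x$ and $y$ share a root-to-leaf path of $P$, observe that in that case both keys live in a single auxiliary splay tree so each access is pure splay cost (the constant $A$), and that otherwise every access forces an interleave at the point where the two root-to-leaf paths diverge, incurring extra search and cut/join overhead (the constant $B > A$). If anything, your treatment is more careful than the paper's, which is quite informal and simply asserts that the interleave overhead yields $B > A$ without the explicit per-interleave lower-bound argument you flag as the hard step.
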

\begin{proof}
    If $x$ and $y$ lie on the same root-leaf path of $P$, each access will require no interleaves, since $x$ and $y$ will both lie on the preferred path of $P$. Thus, the only access cost is the traversal of the auxiliary splay tree. Since there are only two elements in the working set, they will be at the top of the auxiliary splay tree, identical to a working set on an ordinary splay tree. Thus, the working set access cost will be some constant $A$ associated only with splay tree access. 
    
    Now, let us consider the case in which $x$ and $y$ do not lie on the same root-leaf path of $P$. In this case, each access will require an interleave to switch the preferred child pointers between the two different root-leaf paths to $x$ and $y$. This interleave will require a large constant overhead cost so that the average cost per access is some constant $B > A$. 
\end{proof}

\begin{lemma}
\label{lma:commonpath}
    Let $x$ and $y$ be two distinct keys $x$ and $y$ randomly selected from the universe $\{ 1, 2, \dots, n\}$. The probability that $x$ and $y$ lie on a common root-leaf path in $P$ (i.e. the probability that either $x$ lies in the subtree of $y$ in $P$ or vice-versa) is $2 \lg n / n$ (to leading order).
\end{lemma}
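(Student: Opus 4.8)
The plan is to reformulate the event geometrically and reduce it to a counting problem. Since the keys $1, \dots, n$ sit in symmetric (in-order) order in the complete binary tree $P$, two distinct keys $x$ and $y$ lie on a common root-to-leaf path precisely when one is an ancestor of the other in $P$ --- equivalently, when one lies in the subtree rooted at the other, as the statement itself notes. Thus, writing $N$ for the number of unordered ancestor--descendant pairs in $P$ and taking $x, y$ uniform over the $\binom{n}{2}$ unordered pairs of distinct keys, the desired probability is exactly $N / \binom{n}{2}$. The entire proof therefore reduces to counting $N$ and estimating it to leading order.

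To count $N$, I would assume without loss of generality that $n = 2^h - 1$, so that $P$ is a perfect binary tree of height $h$ with $h = \lg(n+1)$; this restriction only perturbs lower-order terms. Each unordered ancestor--descendant pair is counted exactly once by its (unique) descendant, so $N = \sum_v (\text{depth of } v)$, where the root has depth $0$. Grouping the nodes by level and using that level $\ell$ (for $\ell = 0, \dots, h-1$) contains $2^{\ell}$ nodes, each having exactly $\ell$ proper ancestors, gives
\begin{equation*}
N = \sum_{\ell=0}^{h-1} \ell \, 2^{\ell}.
\end{equation*}
This is a standard arithmetic--geometric sum, and it evaluates to $(h-2)2^{h} + 2$.

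Finally, I would substitute $2^h = n+1$ and $h = \lg(n+1)$ to obtain $N = (h-2)(n+1) + 2$, whose leading term is $n \lg n$. Dividing by $\binom{n}{2} = n(n-1)/2 \sim n^2/2$ then yields
\begin{equation*}
\frac{N}{\binom{n}{2}} \sim \frac{n \lg n}{n^2/2} = \frac{2 \lg n}{n},
\end{equation*}
as claimed. The computation is entirely routine; the only point that requires care --- and the closest thing to an obstacle --- is verifying that the deepest levels dominate the sum, so that both the perfect-tree assumption and the replacement of $h$ and $2^h$ by their expressions in $n$ genuinely affect only lower-order terms, leaving the leading-order estimate $2 \lg n / n$ intact.
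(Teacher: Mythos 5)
Your proof is correct and follows essentially the same route as the paper's: both reduce the probability to counting ancestor--descendant pairs over $\binom{n}{2}$, count them level by level via $\sum_{\ell} \ell\, 2^{\ell}$ under the assumption $n = 2^h - 1$, and extract the leading term $n \lg n$. The only difference is that you evaluate the arithmetic--geometric sum exactly as $(h-2)2^h + 2$ where the paper just approximates it by $2^h h$, which is a minor refinement rather than a different argument.
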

\begin{proof}
    We assume that $n = 2^h - 1$ such that the last level of $P$ is full, where $h$ is the height of $P$. It is easy to see that this result will hold for all other $n$ since we are only concerned with leading order terms. Assume, without loss of generality, that the depth of $x$ $\leq$ depth of $y$ in $P$. The total number of possible pairs $x,y$ is $\binom{n}{2}$, so we focus our efforts on counting the number of pairs such that $x$ is an ancestor of $y$. 

    If $l$ is the depth of $y$, then the number of satisfying $x$ is clearly $l$, the number of ancestors of $y$. Then, the total number of satisfying pairs of keys is simply
    \begin{equation*}
        \sum_{l=0}^{h-1} 2^l l \approx 2^h h \approx n \lg n.
    \end{equation*}
    And, the probability of selecting $x$ and $y$ on the same root-leaf path is
    \begin{equation*}
        \frac{n \lg n}{\binom{n}{2}} \approx \frac{2 n \lg n}{n^2} = \frac{2\lg n}{n}
    \end{equation*}
    as desired. 
\end{proof}

If we let $\rho(n) = 2 \lg n / n$, we can now expect that our data for $k=2$ should fit the model 
\begin{equation}
\label{eq:fitk2}
    \text{average cost} = \rho(n) A + (1 - \rho(n)) B = B - (B-A)\rho(n)
\end{equation}
where $A$ is the `cheap' cost and $B$ is the `expensive' cost based on the cases analyzed in Lemma \ref{lma:multisplay-k2}. When we fit our experimental data to this model, we find a best fit curve as shown in Figure \ref{fig:working_set_multisplay_k}, and constants $A = 10.641$, $B=34.679$. This is consistent with our data since $A$ is approximately equal to the minimum average access cost observed, and $B$ is approximately equal to the maximum average cost observed. So, we have experimental verification of our analysis for the $k=2$ case. 

For larger $k$, we can see that extending this analysis would involve detailed combinatorial analysis of many cases of keys lying in the same or different preferred paths. Rather than performing that analysis here, we can state that the average cost per access for working set of size $k$ is $\lg k \cdot p(\rho(n))$ where $p$ is some polynomial. Particularly,
\begin{equation*}
    p(\rho(n)) = \sum_{t=0}^T c_t \rho(n)^t.
\end{equation*}
We conjecture that the degree of the polynomial should be $T = k-1$ due to the combinatorial nature of the problem. 

In line with this general analysis, we also find a best-fit curve for our experimental data for $k=4$, using a degree 3 polynomial such that
\begin{equation}
\label{eq:fitk4}
    \text{average cost} = c_0 + c_1 \rho(n) + c_2 \rho(n)^2 + c_3 \rho(n)^3.
\end{equation}
The constants we find for best fit are $c_0 = 59.087$, $c_1 = -175.309$, $c_2 = 305.574$, and $c_3 = -285.435$. 

\subsubsection{Comparison of working set access on all trees}

These sets of data produce a clear hierarchy amongst the three data structures when it comes to the working set access. The splay tree demonstrates its theoretical optimality in practice, with little to no dependence on $n$ on the average cost per access. We can understand why the splay tree performs so optimally by considering the first pass on the working set; since each splay brings the most recently accessed key to the top of the splay tree, after the first pass, all keys are at most depth $k$ from the root of the splay tree. And, once all keys are near the top of the tree, they remain there and quickly occupy the highest $\Theta(\lg k)$ levels of the tree. 

The multi-splay tree has some large overheads which cause it to perform much worse than the splay tree on the working set access, which likely has to do with its requirements to organize its auxiliary trees in a particular fashion in order to achieve $O(\lg \lg n)$-competitiveness. And, finally the tango tree exhibits average cost per access $\Theta(\lg k \lg \lg n)$, which, as described earlier, is likely due to the combination of the use of balanced red-black trees and the structure of the auxiliary trees according to the preferred paths in $P$. 

Finally, we fix $n$ and compare the running times of each of the data structures as $k$ varies to understand their numerical relative performance. As seen in Fig. \ref{fig:working_set}, The splay outperforms the multi-splay tree by a factor of about 30,   and the multi-splay tree outperforms the tango tree by a factor of about 5. So, as with the other access sequences, the splay tree is most practically efficient. 

\begin{figure}[ht]
    \centering
    \includegraphics[height=2.5in]{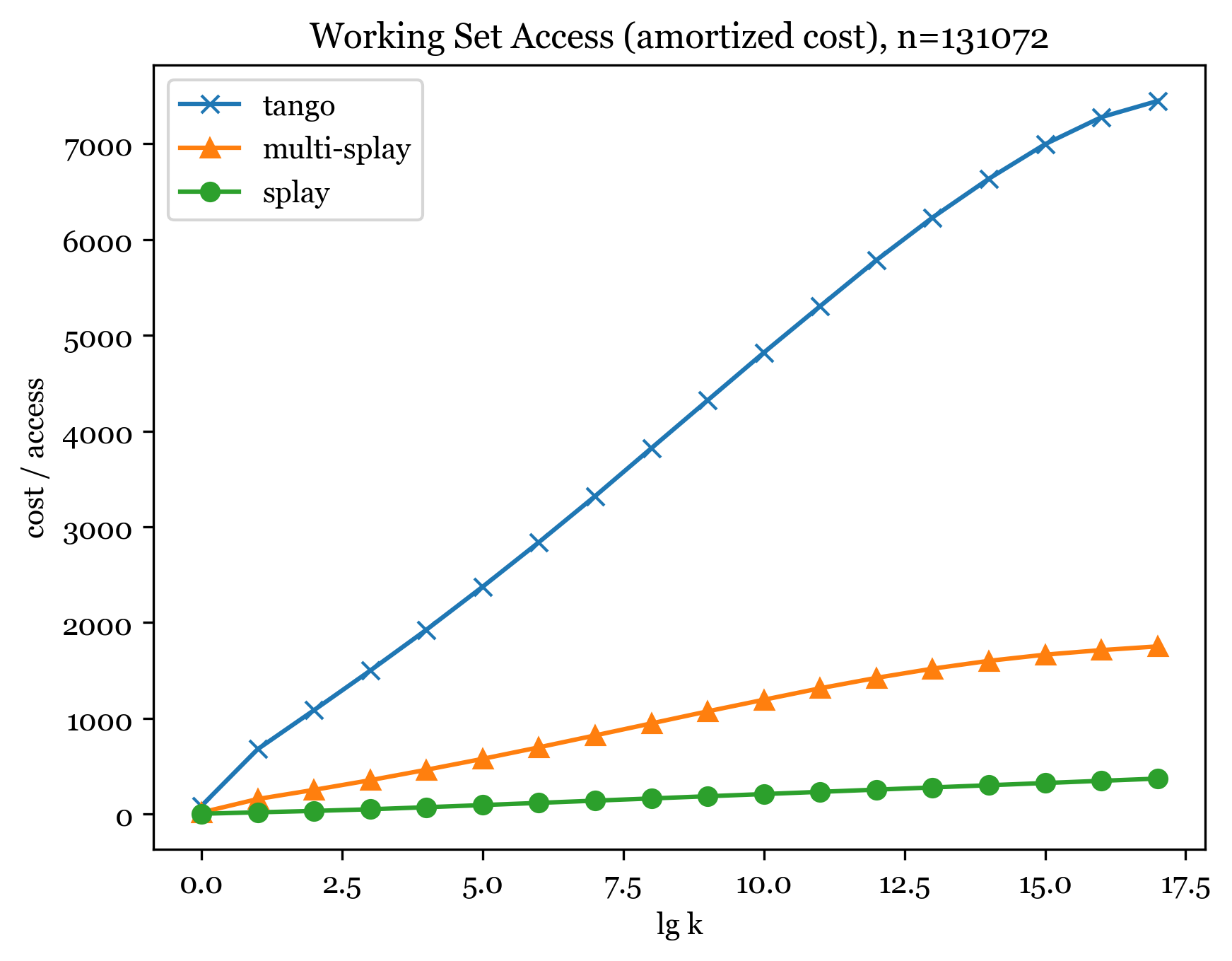}
    \caption{Results of running a working set access sequence on the tango tree, multi-splay tree, and splay tree as $n=131072$ is fixed and $k$ varies.}
    \label{fig:working_set}
\end{figure}

\subsection{Unified Property}
The unified property has not been proven to hold for multi-splay trees or splay trees. Therefore, experimental results are valuable in solving this problem.

The results for the unified property were similar to those for the working set property, which could indicate that splay trees and multi-splay trees hold the unified property. Because of this, we omit the graphs for the tango tree and multi-splay as they are very similar.

However, one surprising finding was that the cost per access increases sub-linearly with respect to $\lg k$ for the splay tree, as shown in Fig. \ref{fig:unified_splay}. This suggests that there may be another property related to the specific access sequence used in the experiment. However, increasing the number of elements with a fixed value of $k$ did not affect the cost per access (similar to as seen for the working set property).

\begin{figure}[htb]
    \centering
    \includegraphics[height=2.5in]{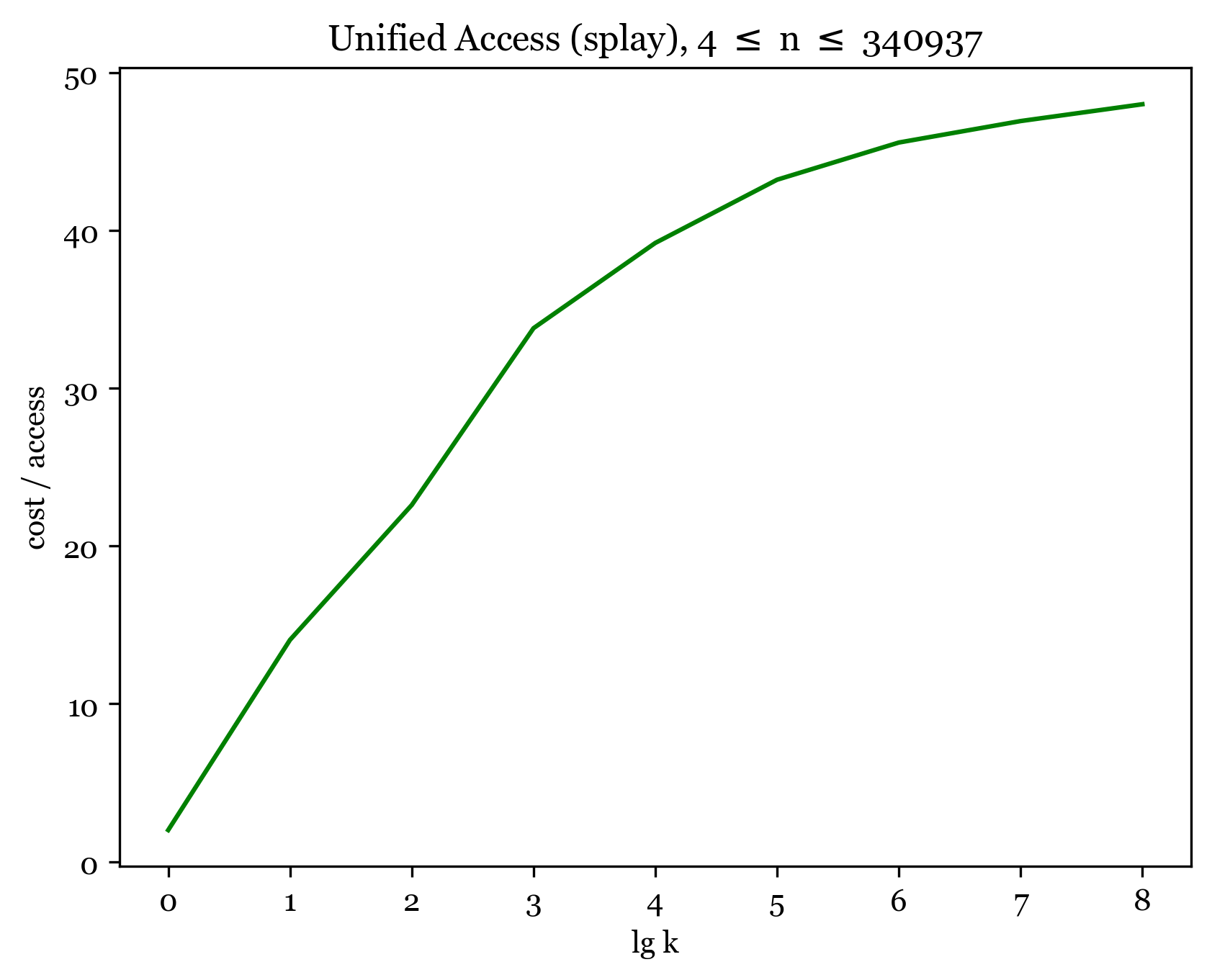}  
    \caption{
    Results of running a unified set access sequence on the splay tree as $k$ the size of the working set varies. 
    In the second plot, we have averaged the access costs for $4 \leq n \leq 340937$ (as appropriate for each value of $k$) to provide clarity since the costs were nearly identical across different values of $n$.}
    \label{fig:unified_splay}
\end{figure}

\section{Conclusion}

Overall, we found that in many practical cases, the splay tree outperforms other binary search tree algorithms. This is despite the fact that tango trees and multi-splay trees are proven to be $O(\lg \lg n)$-competitive while no such bound is known for splay trees (beyond the trivial $O(\lg n)$-competitiveness). For, the sequences we tested, splay trees seems to suffer from little overhead compared to the other trees, especially on large access sequences. 

Our experiments revealed that the tango tree, in practice, seems to have a multiplicative $\Theta(\lg \lg n)$ overhead on many access sequences (compared to the optimal running time). We prove that this is the case for the sequential access, and our detailed proof (which is, as far as we know, the first of its kind), provides insight into why this seems to be true. Specifically, the use of red-black trees as auxiliary trees seems to force this overhead to be present on many accesses. 

This would give hope for the multi-splay tree, which uses splay trees instead as auxiliary trees. And, while multi-splay trees are proven to be asymptotically optimal on many access sequences, our experiments show that the constant terms on multi-splay running times are far from optimal in absolute terms. Due to their organization based on preferred paths, multi-splay trees seem to suffer from larger overheads and constant factors, resulting in slow downs of one or more orders of magnitude compared to the splay tree. This indicates that the current approach to competitive binary search trees (i.e. via the interleave bound) may not be optimal for achieving practical performance.

Our last conclusion is that splay trees and multi-splay have shown experimental hints of following the unified property. This only gives more plausibility to the belief that splay trees and multi-splay trees are dynamically optimal. 

Overall, this work contributes a first-of-its-kind extensive experimental implementation and analysis of competitive binary search trees, and uses this to develop novel insights about the behavior of these trees.

\section{Future Work}
This study has provided the first experimental testing, to the best of our knowledge, of various BST properties, such as the working set property and unified property. The results were promising with respect to splay and multi-splay optimality. However, there are still other BST properties that need to be experimentally tested.

Furthermore, our study has focused on a limited number of BSTs. A larger research can be conducted on a larger set of known BSTs to further explore their properties and relative performance.

Moreover, there are interesting behaviors that can be observed in relation to the formulated access sequence and its relationship to the unified property. One area for further investigation would be to experiment with different types of access sequences with a fixed unified bound, or to explore the non-linear trend observed with increasing $\lg k$.

Finally, while our choice to use the BST computation model for measuring access costs is useful for comparing relative performance and understanding asymptotic trends, it may be useful to also measure real running times in order to understand the practical effects of e.g. cache, recursive overhead, etc. on various BST implementations. 



\bibliography{main}

\appendix

\section{Proof of tango tree running time on sequential access}
\label{sec:tango-worst-case}

Here, we present a more formal proof of the worst-case competitive running time of the tango tree on the sequential access. Consider a tango tree $T$ which is initialized with no preferred children (i.e. each auxiliary tree has size 1) and the associated complete binary tree $P$. Henceforth we will refer to the preferred path of the root of $P$ as simply the preferred path of $P$ for brevity. 

\begin{lemma}
\label{lma:pref-length}
    After the first access in the sequential access, the preferred path of $P$ always has extends from the root of $P$ to a leaf of $P$ (and thus has length $\Theta(\lg n)$).
\end{lemma}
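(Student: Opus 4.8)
The plan is to reduce the dynamic rule for preferred children to a static description in terms of which key was accessed most recently, and then simply trace the preferred path down from the root. First I would observe that at any moment the preferred child of a node $v$ points toward the most recently accessed key lying in the subtree rooted at $v$: this is immediate from the update rule, since an access to $x_i$ touches exactly those nodes $v$ with $x_i$ in $\text{subtree}(v)$ and sets each of their preferred children toward $x_i$. For the sequential access, having processed $1,2,\dots,j$ in increasing order, the most recently accessed key in $\text{subtree}(v)$ is therefore $M_v$, the largest key of $\text{subtree}(v)$ that is at most $j$. Since $L(v)$ is exactly the set of keys in $\text{subtree}(v)$ that are $\le v$, the preferred child of $v$ is its left child precisely when $M_v \le v$, and its right child otherwise.

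With this characterization in hand, I would trace the preferred path from the root. Starting at the root, $M_{\text{root}} = j$, and descending according to the rule above one follows exactly the search path toward $j$: every node $v$ strictly above $j$ on this path has $M_v = j$ and points its preferred child toward $j$. If $j$ is a leaf we are already done. If $j$ is internal, then $M_j = j \le j$, so the path turns into the left child $\ell$ of $j$; there $M_\ell = j-1$, the maximum key of $\text{subtree}(\ell)$, and since $j-1 > \ell$ whenever $\ell$ is internal, the path then follows right children down the rightmost spine of $\text{subtree}(\ell)$ to its maximum key $j-1$, which is a leaf. In either case the preferred path runs from the root to a leaf.

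The step I expect to be the crux is verifying that the traversal never stalls at an internal node whose preferred child is undefined --- that is, that at every internal node $v$ reached, the child we move to has a nonempty set of already-accessed keys in its subtree, so that $M$ remains well defined and the descent continues. This is clear while we move toward $j$ (that child's subtree contains $j$ itself), and the only delicate transition is at $v = j$: there one must check that the left subtree of $j$ still contains an accessed key, which holds because its maximum key is $j-1 \le j$ and $j \ge 2$ for any internal $j$. Because the descent is strict and can always continue at an internal node, it must terminate at a leaf. Finally, since $P$ is the complete binary tree on $n$ keys, every root-to-leaf path has length equal to the height $\Theta(\lg n)$, giving the stated bound. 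Since the argument goes through verbatim for every $j \ge 1$, the conclusion holds after the first access (the case $j=1$) and every access thereafter, which is exactly the ``always'' in the statement.
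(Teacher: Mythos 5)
Your reduction to a static description is sound as far as it goes: the characterization that $v$'s preferred child is left precisely when $M_v \le v$ (with $M_v$ the largest already-accessed key in $v$'s subtree) is correct, and it is a genuinely different route from the paper's proof, which never traces the bottom of the path at all --- the paper instead argues that accessing a non-leaf node $y$ changes \emph{no} preferred pointers (its just-accessed predecessor $y-1$ lies in $L(y)$ and on the same side of every strict ancestor of $y$), so the preferred path is simply the one left by the previous access, which by induction already reaches a leaf. However, your explicit trace below $j$ contains steps that fail for general $n$. The lemma concerns the complete binary tree on arbitrary $n$, and, as the paper's own proof is careful to note, such a tree may contain a node with only a left child. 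Both of your claims --- ``$j-1 > \ell$ whenever $\ell$ is internal'' and ``its maximum key $j-1$, which is a leaf'' --- can fail at exactly that node. Concretely, take $n = 8$, so $P$ is: root $5$; children $3$ and $7$; $3$'s children $2$ and $4$; $7$'s children $6$ and $8$; and $2$ has only the left child $1$. For $j = 3$ you get $\ell = 2$, which is internal, yet $M_\ell = j-1 = 2 = \ell$ (not $> \ell$), and $2$ is not a leaf: the true preferred path after accessing $3$ is $5 \to 3 \to 2 \to 1$, turning \emph{left} at $\ell$ and continuing strictly past $j-1$. Your ``never stalls'' verification explicitly covers only the nodes on the search path to $j$ and the turn at $j$ itself (``the only delicate transition is at $v = j$''), so these left turns below $j$ lie outside the cases you checked.

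The gap is repairable with your own tools, because the stalling argument generalizes: at any internal node $w$ reached below $j$ with $M_w = w$, the left subtree of $w$ is nonempty and its maximum key is $w - 1 \le j$, hence already accessed, so the descent continues; since the descent can never stall at an internal node, the preferred path must terminate at a leaf, whether or not it matches the right-spine description. (Equivalently: a left-filled complete binary tree has at most one non-leaf lacking a right child, and its left child is a leaf, so the path ends at most one step past $j-1$.) With that one additional case your argument is complete, and for $n = 2^h - 1$ it is correct exactly as written. The trade-off versus the paper is then clear: your approach identifies the entire preferred path explicitly, which is more informative, while the paper's invariance argument is shorter precisely because it avoids ever analyzing the shape of the path's lower end.
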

\begin{proof}
    We can consider two cases: immediately after an access to a leaf of $P$, and immediately after an access to a non-leaf node of $P$. In the first case, the preferred path of $P$ must contain the just-accessed leaf node, immediately satisfying the claim.

    In the second case, we consider the preferred path immediately after accessing a non-leaf node $y$ of $P$. Since $y$ is not a leaf node, $y$ must have a left child; this is because the definition of a complete binary tree requires that every node of $P$ either has no children, two children, or only a left child.  Let $x$ be the rightmost descendant of the left child of $y$ (which must exist and be distinct from $y$). Clearly, $x$ is the predecessor to $y$ in $P$, so we must have accessed $x$ immediately before accessing $y$.
    
    This means that before accessing $y$, $x$ was in the preferred path of $P$, and since $y$ is an ancestor of $x$, $y$ was also in the preferred path of $P$. Additionally, $y$'s preferred child was set to its left child since $x \in L(y)$. Since $y \in L(y)$, accessing $y$ will not change the preferred path of $P$, and so the preferred path of $P$ after accessing $y$ must contain $x$, as desired.
\end{proof}

\begin{corollary}
\label{cor:leaf-access-order}
    During the sequential access, for all leaf nodes $x$ of $P$, $x$ is not added to the preferred path of $P$ until $x$ itself is accessed. 
\end{corollary}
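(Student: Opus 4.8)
The plan is to reduce the statement about the entire preferred path to a statement about a single edge. The key observation is that a leaf $x$ can lie on the preferred path of $P$ only if the edge from its parent $p$ to $x$ is preferred, i.e.\ only if $p$'s preferred child is $x$. Indeed, if $x$ is on the preferred path of $P$, then following preferred-child pointers from the root reaches $x$, and since consecutive nodes on such a path are in a parent--child relationship, the node immediately preceding $x$ must be its parent $p$, forcing $p$'s preferred child to be $x$. Hence it suffices to prove that $p$'s preferred child points to $x$ only after $x$ itself has been accessed.

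I would then split into two cases according to whether $x$ is the right or the left child of $p$, using the definition of the preferred child together with the regions $L(p)$ and $R(p)$ from Section \ref{sec:ilb} (recall $L(p)$ is the left subtree of $p$ together with $p$, and $R(p)$ is the right subtree of $p$). If $x$ is the right child of $p$, then because $x$ is a leaf the right subtree of $p$ is exactly $\{x\}$, so $R(p) = \{x\}$; the preferred child of $p$ points right only if the most recent access inside $p$'s subtree lies in $R(p)$, which forces that access to have been $x$ itself. This case therefore immediately yields that $x$ has been accessed.

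The left-child case is the main obstacle, since here $L(p) = \{p, x\}$, so $p$'s preferred child may point to $x$ either because $x$ was accessed \emph{or} because $p$ was accessed, and the latter does not obviously require $x$ to have been accessed first. To resolve this I would invoke the specific structure of the sequential access together with the symmetric ordering of $P$: since $x$ is a leaf and the left child of $p$, the left subtree of $p$ is exactly $\{x\}$, so $x$ is the rightmost node of that subtree and hence the predecessor of $p$ in $P$. Therefore $x$ has key value one less than $p$, and in the sequential access $X = 1, 2, \dots, n$ the key $x$ is accessed immediately before $p$. Consequently, whenever $p$ has been accessed, $x$ has already been accessed, so $p$'s preferred child can point to $x$ only after $x$ itself has been accessed.

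Combining the two cases, any time a leaf $x$ lies on the preferred path of $P$ the key $x$ must already have been accessed, which is exactly the claim. The only delicate point is the left-child case, and the argument there rests entirely on the predecessor relationship forced by the sequential ordering; I would be careful to flag that this is the single place where the specific sequence $X = 1, 2, \dots, n$ (rather than an arbitrary access sequence) is actually used.
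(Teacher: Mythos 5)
Your proof is correct, but it takes a genuinely different route from the paper's. The paper treats this statement as a direct consequence of (the proof of) Lemma \ref{lma:pref-length}: there it was shown that accessing a non-leaf node of $P$ never changes the preferred path of $P$, so the preferred path of $P$ can only change at leaf accesses, and after a leaf access the only leaf it contains is the leaf just accessed; hence a leaf joins the preferred path exactly when it is accessed. Your argument is instead local and self-contained: you reduce membership of a leaf $x$ in the preferred path to the single condition that its parent $p$ has preferred child $x$, and then do a case split. Your right-child case ($R(p) = \{x\}$, so the pointer can only be set by an access to $x$) is notable in that it holds for \emph{any} access sequence, not just the sequential one; the sequential structure enters only in the left-child case, via the predecessor relation $x = p - 1$, which forces $x$ to be accessed before $p$. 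Interestingly, the paper uses the same predecessor observation, but in the opposite direction and one level up: in Lemma \ref{lma:pref-length} it is applied to a non-leaf node $y$ and its already-accessed predecessor to show $y$'s preferred child is unchanged, whereas you apply it to the leaf and its parent. What the paper's route buys is brevity (a two-line corollary, reusing an invariant it needs anyway for Lemma \ref{lma:leaf-access-time}); what your route buys is independence from Lemma \ref{lma:pref-length} and a sharper isolation of exactly where the sequential assumption is needed, which you correctly flag. One small point of care: your reduction implicitly assumes $x$ is not the root (so that $p$ exists), which fails only in the trivial case $n = 1$ and is worth a sentence.
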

\begin{proof}
   We showed above that the preferred path is not updated when accessing any non-leaf node of $P$. So, the preferred path of $P$ always contains the most recently accessed leaf of $P$. Clearly, then, a leaf $x$ of $P$ is only added to the preferred path of $P$ when $x$ is accessed (and it could not have been added earlier).
\end{proof}

\begin{lemma}
\label{lma:leaf-access-time}
    During the sequential access, for all leaf nodes $x$ of $P$, the running time of the access to $x$ is $\Omega(1 + \lg \lg n)$.
\end{lemma}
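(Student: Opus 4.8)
The plan is to charge the whole $\Omega(1 + \lg\lg n)$ lower bound to the \emph{search} phase of the access to $x$, so that it holds no matter how cheap the subsequent cuts, joins, and rebalancing turn out to be. First I would pin down the state of the tree just before $x$ is accessed. By Corollary \ref{cor:leaf-access-order}, $x$ has not yet been added to the preferred path of $P$, so $x$ is not a key of the top auxiliary tree (the one whose root is the root of the whole tango tree and which represents the preferred path of $P$). By Lemma \ref{lma:pref-length} that preferred path runs from the root of $P$ to a leaf, so it has $\Theta(\lg n)$ nodes; hence the top auxiliary tree is a red--black tree on $s = \Theta(\lg n)$ keys.

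Next I would track where the root-to-$x$ search leaves this top auxiliary tree. The search starts at its root and, since $x$ is not one of its keys, it must eventually follow a pointer from some red--black node into a child auxiliary tree. Within the red--black structure itself this pointer occupies an \emph{external} (null) slot: it is precisely the place where the off-path subtree of $P$ is grafted onto the stored path. Thus the number of red--black pointers followed before the first transition equals the depth, inside the top auxiliary tree, of an external node, and the cost of the access is at least this quantity.

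The crux of the argument is then the observation that a red--black tree cannot have a shallow external node. Every root-to-null path carries the same number of black nodes, the black-height $\beta$, so every external node sits at depth at least $\beta$; and since a red--black tree on $s$ keys has height at most $2\beta$ and therefore $s \le 2^{2\beta} - 1$, we get $\beta \ge \frac{1}{2}\lg(s+1) = \Omega(\lg s)$. With $s = \Theta(\lg n)$ this forces every external node --- in particular the transition slot the search falls into --- to depth $\Omega(\lg\lg n)$. Hence the search follows $\Omega(\lg\lg n)$ pointers inside the top auxiliary tree before it can even leave it, and since any access costs at least $1$, the total is $\Omega(1 + \lg\lg n)$.

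I expect the main obstacle to be exactly this last point. A priori one might worry that for some leaves the search leaves the top auxiliary tree near its root, making the access cheap and breaking a per-leaf bound; and indeed the depth of the branching node of $P$ within the red--black tree varies wildly from leaf to leaf, so no case analysis on the position of $x$ looks tractable. The resolution is that the exit never occurs \emph{at} a red--black node but always \emph{at an external slot}, and the black-height property rules out shallow external slots \emph{uniformly}. The one point I would state with care is that the attachment of a child auxiliary tree genuinely occupies a null child slot of the red--black structure (so that its depth is governed by the black-height bound), which follows from storing each preferred path as its own augmented red--black auxiliary tree. The very first access (to the single leaf present before any preferred path is established, where Lemma \ref{lma:pref-length} does not apply) is handled separately and trivially: there the search descends $\Theta(\lg n) = \Omega(\lg\lg n)$ levels through singleton auxiliary trees.
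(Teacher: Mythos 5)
Your proposal is correct and follows essentially the same route as the paper's proof: use Corollary \ref{cor:leaf-access-order} and Lemma \ref{lma:pref-length} to show the search must exit the root auxiliary tree (a red--black tree on $\Theta(\lg n)$ keys) through a null/external child slot, whose depth is $\Omega(\lg \lg n)$, with the first access handled separately via the $\Theta(\lg n)$ singleton auxiliary trees. The only difference is that you explicitly derive the black-height bound on external-node depth, which the paper simply asserts as a known red--black tree property.
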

\begin{proof}
    Note that for the first access to $x=1$, we have to traverse through $\Theta(\lg n)$ distinct auxiliary trees of size 1, so the access time is $\Omega(\lg n) \subset \Omega(1 + \lg \lg n)$. 
    
    Right before accessing any other leaf node $x$, $x$ is not part of the preferred path of $P$, from Corollary \ref{cor:leaf-access-order}. Thus, there is at least one interleave during the access to $x$. In particular, let us we consider the auxiliary tree $\AT(P)$ in $T$ representing the preferred path of $P$. During the access to $x$, we have to take some non-preferred child from $\AT(P)$, into a different auxiliary tree. 
    
    Recall that the auxiliary trees are represented using red-black trees, and furthermore that a red-black tree with $k$ nodes satisfies the property that all of its `null' children have depth $\Theta(\lg k)$. Here, `null' children are the children of leaves and the non-real children of nodes with only one child. In the case of the tango tree, all non-preferred children are `null' children of the auxiliary tree. So, to follow a non-preferred child from an auxiliary tree of size $k$ requires navigating to a `null' child at depth $\Theta(\lg k)$ from the root of the auxiliary tree. 

    Before accessing $x$, the auxiliary tree at the root of $T$ is $\AT(P)$ where $|\AT(P)| = \Theta(\lg n)$ from Lemma \ref{lma:pref-length}. And, we know $x \notin \AT(P)$, so some non-preferred child must be followed to access $x$. Navigating to this `null' non-preferred child takes $\Theta(1 + \lg \lg n)$ time due to the red-black structure of the auxiliary tree as noted above. And, so the access to node $x$ takes $\Omega(1 + \lg \lg n)$ time since the total access cost includes this interleave cost.
\end{proof}

\begin{theorem}
    The running time of the tango tree algorithm on the sequential access $X =  1, 2, \dots, n $ is $\Theta(n (1 + \lg \lg n))$.
\end{theorem}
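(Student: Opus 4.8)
The plan is to prove the matching upper and lower bounds separately, since together they yield the claimed $\Theta$ estimate. For the upper bound, I would simply invoke the general running-time theorem established earlier, which bounds the total cost by $O((\OPT(X)+n)(1+\lg\lg n))$. It then suffices to observe that $\OPT(X) = O(n)$ on the sequential access: this is immediate because splay trees complete the sequential access in $O(n)$ total time \cite{seq-splay}, and $\OPT(X)$ is by definition a lower bound on the cost of any BST. (Combined with the trivial bound $\OPT(X) \geq m = n$, this in fact gives $\OPT(X) = \Theta(n)$.) Substituting into the general theorem yields the upper bound $O(n(1+\lg\lg n))$.

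For the lower bound, I would rely entirely on the groundwork laid by the preceding lemmas and corollary. In a complete binary tree $P$ on $n$ keys, a constant fraction — namely $\Theta(n)$ — of the nodes are leaves. During the sequential access $X = 1, 2, \dots, n$, each key is accessed exactly once, so in particular every leaf of $P$ is accessed exactly once. By Lemma \ref{lma:leaf-access-time}, each such leaf access costs $\Omega(1+\lg\lg n)$. Summing this cost over the $\Theta(n)$ distinct leaf accesses gives a total cost of $\Omega(n(1+\lg\lg n))$, which matches the upper bound and completes the proof.

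The bulk of the difficulty has in fact already been dispatched in the earlier results: Lemma \ref{lma:pref-length} guarantees that the root auxiliary tree always has size $\Theta(\lg n)$, Corollary \ref{cor:leaf-access-order} guarantees that reaching a fresh leaf forces an interleave off the preferred path, and Lemma \ref{lma:leaf-access-time} converts that forced interleave into an $\Omega(1+\lg\lg n)$ navigation cost through the balanced red-black auxiliary tree. Consequently, the main point of care remaining at the level of the theorem is an accounting one: I must ensure that the $\Omega(1+\lg\lg n)$ per-access cost is incurred on $\Theta(n)$ genuinely distinct accesses, rather than on only a vanishing number of them. This is precisely why it is essential to count \emph{all} leaves of $P$ (as opposed to, say, isolating only the single most expensive access), and it is what promotes the per-access lower bound into the claimed global bound of $\Omega(n(1+\lg\lg n))$.
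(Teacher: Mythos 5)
Your proposal is correct and follows essentially the same path as the paper: the upper bound via the general $O((\OPT(X)+n)(1+\lg\lg n))$ theorem together with $\OPT(X)=O(n)$ (the paper routes this through Corollary \ref{cor:competitive}, a cosmetic difference), and the lower bound by charging $\Omega(1+\lg\lg n)$ from Lemma \ref{lma:leaf-access-time} to each of the $\Theta(n)$ leaf accesses. Your added justification that $\OPT(X)=O(n)$ follows from splay trees' $O(n)$ sequential-access cost is a nice touch the paper leaves implicit.
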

\begin{proof}
    Since the optimal offline time for the sequential access is $O(n)$, from Corollary \ref{cor:competitive}, we immediately know that the running time of tango tree on the sequential access is $O(n (1 + \lg \lg n))$.
    
    From Lemma \ref{lma:leaf-access-time}, we know that each access to a leaf node of $P$ takes time $\Omega(1 + \lg \lg n)$. Since the number of leaf nodes of $P$ is $\geq n/2 = \Theta(n)$ (which follows from the definition of the complete binary tree), the total running time of the tango tree algorithm on the sequential access must be $\Omega(n (1 + \lg \lg n))$.
\end{proof}

So, we have proven the existence of access sequences for which the competitive bound on the tango tree running time is tight, indicating that the bound cannot be improved.

\end{document}